\documentclass[11pt,fleqn]{article}
\usepackage{amssymb,latexsym,amsmath,amsfonts,graphicx}

    \topmargin -15mm
    \textwidth 160 true mm
    \textheight 240 true mm
    \oddsidemargin 5mm
    \evensidemargin 5mm
    \marginparwidth 19mm
    \advance\textheight by \topskip

    \numberwithin{equation}{section}

    \def\Re{{\rm Re \,}}
    \def\Im{{\rm Im \,}}
    \def\Ai{{\rm Ai \,}}

    \def\bigO{{\cal O}}

    \newtheorem{theorem}{Theorem}[section]

    \newtheorem{proposition}[theorem]{Proposition}
    
    \newtheorem{Definition}[theorem]{Definition}
    
    \newtheorem{Remark}[theorem]{Remark}
    \newenvironment{remark}{\begin{Remark}\rm}{\end{Remark}}
    \newtheorem{Example}[theorem]{Example}
    \newenvironment{example}{\begin{Example}\rm}{\end{Example}}
    \newtheorem{Assumptions}[theorem]{Assumptions}

    \newenvironment{proof}%
    {\rm \trivlist \item[\hskip \labelsep{\bf Proof. }]}%
    {\hspace*{\fill}$\Box$\endtrivlist}
    \newenvironment{varproof}%
    {\rm \trivlist \item[\hskip \labelsep{\bf Proof}]}%
    {\hspace*{\fill}$\Box$\endtrivlist}

    \hyphenation{fac-to-ri-za-tion ge-ne-ra-li-zed mat-ching cha-rac-te-ris-tic Deift
        equi-li-brium con-tai-ning or-tho-go-nal se-cond cha-rac-te-ri-zing}
    \newcommand{\supp}{{\operatorname{supp}}}

    \DeclareMathOperator*{\Tr}{Tr}

\begin{document}
\title{The birth of a cut in unitary random matrix ensembles}
\author{Tom Claeys}

\maketitle

\begin{abstract}
We study unitary random matrix ensembles in the critical regime where a new cut arises away from the original spectrum. We perform a double scaling limit where the size of the matrices tends to infinity, but in such a way that only a bounded number of eigenvalues is expected in the newborn cut.
It turns out that limits of the eigenvalue correlation kernel are given by Hermite kernels corresponding to a finite size Gaussian Unitary Ensemble (GUE). When modifying the double scaling limit slightly, we observe a remarkable transition each time the new cut picks up an additional eigenvalue, leading to a limiting kernel interpolating between GUE-kernels for matrices of size $k$ and size $k+1$. We prove our results using the Riemann-Hilbert approach.
\end{abstract}

\section{Introduction}

We consider unitary invariant random matrix ensembles on the set
of Hermitian $n\times n$ matrices, with a probability density of
the form
\begin{equation}\label{randommatrixensemble}
Z_n^{-1}\exp(-n\Tr V(M))dM,
\end{equation}
where $Z_n$ is a normalization constant and $dM$ is the usual flat
Lebesgue measure on the Hermitian matrices. We assume the
confining potential $V$ to be real analytic on $\mathbb R$ with enough growth at
infinity,
\begin{equation}
    \frac{V(x)}{\log (x^2+1)} \to +\infty \quad\textrm{ as }
    x\to \pm\infty. \label{growthV}
\end{equation}
Eigenvalues of a random matrix in the ensemble
(\ref{randommatrixensemble}) follow a determinantal point process generated by the following
correlation kernel \cite{Mehta},
\begin{equation} \label{kernel}
    K_n(x,y)=
        e^{-\frac{n}{2}V(x)} e^{-\frac{n}{2}V(y)}
        \sum_{k=0}^{n-1} p_k^{(n)}(x) p_k^{(n)}(y),
\end{equation}
given in terms of the orthonormal polynomials
\[
    p_k^{(n)}(x)=\kappa_k^{(n)} x^k + \cdots,
    \qquad\qquad \mbox{$\kappa_k^{(n)}>0$,}
\]
with respect to the weight $e^{-nV}$ on the real line. Using the
Christoffel-Darboux formula, (\ref{kernel}) can also be written in
the following form,
\begin{equation} \label{kernel2}
    K_n(x,y)
        =e^{-\frac{n}{2}V(x)}
e^{-\frac{n}{2}V(y)}\,\frac{\kappa_{n-1}^{(n)}}{\kappa_{n}^{(n)}}\,
\frac{p_n^{(n)}(x)p_{n-1}^{(n)}(y)-p_n^{(n)}(y)p_{n-1}^{(n)}(x)}{x-y}.
\end{equation}

\medskip

If we let the size $n$ of the matrices grow to infinity, the
limiting mean eigenvalue distribution of the ensemble exists and
depends on $V$. In general it can be characterized as the
equilibrium measure $\rho_V$ (see e.g.\ \cite{Deift}) minimizing the
logarithmic energy in external field $V$,
\begin{equation}\label{definition: energy}
    I_{V}(\rho)=
        \iint \log \frac{1}{|x-y|}d\rho(x)d\rho(y)
        +\int V(y)d\rho(y),
\end{equation}
among all probability measures $\rho$ on $\mathbb R$. This
minimization property is equivalent to the following Euler-Lagrange
variational conditions \cite{SaTo}: there exists a constant
$\ell\in\mathbb{R}$ such that
\begin{align}
    \label{variationalcondition:mu-equality}
    & 2\int \log |x-y|d\rho_V(y)-V(x)=\ell,
        &\mbox{for $x\in \supp\,\rho_V$,}
    \\[1ex]
    \label{variationalcondition:mu-inequality}
    & 2\int \log |x-y|d\rho_V(y)-V(x)\leq \ell,
        &\mbox{for $x\in \mathbb R\setminus \supp\,\rho_V$.}
\end{align}
It is known \cite{DKM} that, for real analytic $V$, $\rho_V$ has a
density $\varphi_V$ which can be written in the form
\begin{equation}\label{def Q}
\varphi_V(x)=\frac{1}{\pi}\sqrt{q_V^-(x)},
\end{equation}
where $q_V^-$ denotes the negative part of a real analytic
function $q_V=q_V^+-q_V^-$. Moreover $q_V(\pm x)$ is positive for large real $x$,
from which it readily follows that $\supp\,\rho_V$ is a finite
union of bounded intervals. The endpoints of the support are the
zeros of $q_V$ with odd multiplicity. Generically the following
conditions hold \cite{KM}:
\begin{itemize}
\item[(a)] the variational inequality
(\ref{variationalcondition:mu-inequality}) holds strictly for
$x\in\mathbb R\setminus \supp\,\rho_V$, \item[(b)] $\varphi_V$ is positive in the interior of its support, or equivalently, $q_V^-$ has no
zeros in the interior of $\supp\,\rho_V$, \item[(c)] $\varphi_V$ behaves like a square root near the endpoints of its support, or equivalently, the zeros of
$q_V$ with odd multiplicity are simple.
\end{itemize}
In the critical cases where the above generic conditions do not
hold, singular points occur. According to \cite{DKMVZ2}, singular
points are classified as follows.
\begin{itemize}
\item[(i)] Type I singular points or singular exterior points:
these are isolated points outside $\supp\,\rho_V$ where equality
in (\ref{variationalcondition:mu-inequality}) holds. Here $q_V$
vanishes at an order $4m-2$ for $m\geq 1$. \item[(ii)] Type II
singular points or singular interior points: these are points in
the interior of $\supp\,\rho_V$ where the density $\varphi_V$
vanishes. Necessarily $q_V$ has a zero of multiplicity $4m$ at
such a point. \item[(iii)] Type III singular points or singular
edge points: these are endpoints of $\supp\,\rho_V$ where the
density $\varphi_V$ vanishes faster than a square root. Here the
only possibilities are that $q_V$ has a zero of multiplicity
$4m+1$ for $m\geq 1$.
\end{itemize}

When varying the potential $V$, the critical ensembles are the ones
where a change in the number of intervals of $\supp\,\rho_V$ may
possibly occur. Type I singular points correspond with the birth of
a new cut away from the spectrum. Type II singular points indicate
the closing of a gap in between two intervals of the support. Near
type III singular points, a new interval can arise at the edge of
the spectrum, or in other words, a gap can close simultaneously with
one of the cuts.

\begin{remark}\label{remark 0}
It is important to note that not all multiple zeros of $q_V$ are
singular points. For example, if $\supp\,\rho_V$ consists of one
interval $[a,b]$, it is known \cite{Deift} that
\begin{equation}\label{varQ}
2\int \log |x-y|d\rho_V(y)-V(x)-\ell=-2\int_b^x q_V^{1/2}(y)dy
\qquad \mbox{ for $x>b$.}
\end{equation}
Consequently, in order to have a type I singular (exterior) point
$x^*>b$, $q_V^{1/2}$ should necessarily change sign at some point
in between $b$ and $x^*$. At this intermediate point, $q_V$ has a
multiple zero although in general it is not a singular point.
\end{remark}

The local behavior of the eigenvalues of large random matrices near
some reference point $x^*$ depends on the 'nature' of $x^*$. Here
'nature' refers to the behavior of the limiting mean eigenvalue
density near $x^*$. The two regular kinds of points that occur, are
points in the bulk of the spectrum (where $\varphi_V$ is positive)
and points at the edge of the spectrum where $\varphi_V$ vanishes
like a square root. The critical ensembles give lead, as described
above, to three additional types of points that correspond to
different local behavior of the eigenvalues.

Local scaling limits of the eigenvalue correlation kernel
(\ref{kernel}) turn out to be universal, which means that they
depend on the nature of  the reference point $x^*$, but not on the
confining potential $V$ nor on the position of $x^*$. In the bulk of
the spectrum this leads one to the sine kernel \cite{BI1, Deift, DKMVZ2, DKMVZ1, PS},
\[\lim_{n\to\infty}\frac{1}{\pi\varphi_V(x^*)n}K_n(x^*+\frac{u}{\pi\varphi_V(x^*)n},
x^*+\frac{v}{\pi\varphi_V(x^*)n})=\frac{\sin\pi(u-v)}{\pi(u-v)}.
\]
Near a regular edge point, the limiting kernel is given in terms
of Airy functions \cite{DG2},
\[\lim_{n\to\infty}\frac{1}{c_Vn^{2/3}}K_n(x^*+\frac{u}{c_Vn^{2/3}},
x^*+\frac{v}{c_Vn^{2/3}})=\frac{\Ai(u)\Ai'(v)-\Ai(v)\Ai'(u)}{u-v},
\] for some constant $c_V$.

\medskip

As already mentioned, singular points indicate a transition where
the number of intervals in the spectrum can change. These transitions
can be observed when including a parameter in the confining potential,
$V=V_t$. If the singular point corresponds to the value of $t=t_c$,
double scaling limits of the kernel, where we let $n\to\infty$ and
at the same time we let $t\to t_c$ at an appropriate rate, lead to
families of limiting kernels, depending on a parameter. Near singular interior
points these double scaling limits are given by a kernel related to
the Hastings-McLeod solution of the Painlev\'e II equation, see
\cite{BI2, CK, CKV, Shcherbina}. In the singular edge case, the limiting
kernels are related to a special solution of an equation in the
Painlev\'e I hierarchy \cite{BB, CV2}. Singular exterior points where a new cut is born, have
been studied in \cite{Eynard}, but rigorous results about the
limiting eigenvalue correlation kernel are not available in the
literature yet. It is the aim of this paper to obtain rigorous asymptotics for the correlation kernel near a singular exterior point in the birth of a new cut. Unlike in the two other critical cases, there are no Painlev\'e equations involved describing the local behavior of eigenvalues. In this case, the limit of the eigenvalue correlation kernel will be given by a kernel corresponding to a finite size Gaussian Unitary Ensemble (GUE). The size of the relevant GUE will depend on the precise choice of double scaling limit we take, or on the number of eigenvalues that are expected in the new cut.

\medskip

Similar transitions as the ones described above also occur in the study of the small dispersion limit of the Korteweg-De Vries equation \cite{GK}, and in a more general context when studying Hamiltonian perturbations of hyperbolic systems \cite{DubrovinI, Dubrovin}. In transitional regimes where algebraic asymptotics for a solution of the KdV equation turn into elliptic asymptotics, phenomena are observed which are expected to correspond to the transitions for unitary random matrix ensembles, corresponding to the three types of singular points. Painlev\'e asymptotics similar to those near singular interior points and singular edge points in random matrix ensembles have been verified numerically \cite{GK}, but not much is known about singular exterior points. This is yet another motivation to study the birth of a cut.

\subsection{Statement of results}

The aim of this paper is to obtain a double scaling limit of the
eigenvalue correlation kernel near singular exterior points. We
deal with the case where $q_V$ has a double zero at $x^*$, which
is the lowest possible order of vanishing for a type I singular
point.

\medskip

We consider a one-parameter family of potentials $V_t=V/t$, where
$V=V_1$ is such that a singular exterior point $x^*$ is present, and such that $\supp\,\rho_V=[a,b]$. Furthermore we assume that there are, besides $x^*$, no
other singular points. For such a potential $V$, it was shown in \cite{KM} that
$\rho_t:=\rho_{V_t}$ is supported on one interval $[a_t,b_t]$
for $t$ slightly less than $1$, without singular points.
It was also shown in this paper that for $t$ slightly bigger than $1$, a new
cut is born near $x^*$, so that
$\supp\,\rho_t=[a_t,b_t]\cup[\alpha_t,\beta_t]$. As
$t\searrow 1$, the cut disappears, so that $\alpha_t,\beta_t\to x^*$.
The restrictions that $\supp\,\rho_V$ is supported on one interval, and that there are no other singular points besides $x^*$, are technical rather than crucial. We expect that universality remains valid for potentials $V$ that do not satisfy those assumptions.

\medskip

We work in a double scaling regime where we let the size $n$ of
the matrices tend to infinity, and at the same time we let $t\to
1$ in such a way that \[|t-1|\leq M\frac{\log n}{n}\] for some
$M>0$ which can be arbitrary large. This double scaling limit
implies that a bounded number of eigenvalues is expected in the
vicinity of $x^*$.

\medskip

Let us now formulate the main result of the present work.

\begin{theorem}\label{theorem: universality}
Let $V$ be real analytic satisfying condition (\ref{growthV}), and
assume that $x^*$ is a type I singular point where $q_V$ has a
double zero. Assume also that $\supp\,\rho_V=[a,b]$ with $b<x^*$, and that there are no other singular points
besides $x^*$. Let $K_{n,t}$ be the eigenvalue correlation kernel
(\ref{kernel}) for the potential $V_t=V/t$. We take a double
scaling limit where $n\to\infty$ and $t\to 1$ in such a way that
$|t-1|\leq M\frac{\log n}{n}$. We define
\begin{equation}s:=
2(t-1)\frac{n}{\log n}\ {\small\int_b^{x^*}\hspace{-0.4cm}\frac{1}{\sqrt{(s-a)(s-b)}}\,ds},\end{equation} so
that $s$ remains bounded in the double scaling limit.
 Then, depending on the value of $s$, we have the following limits for the eigenvalue correlation kernel,
\begin{multline}\label{kernel1}
\lim\frac{1}{(cn)^{1/2}}K_{n,t}\left(x^*+\frac{u}{(cn)^{1/2}},
x^*+\frac{v}{(cn)^{1/2}}\right)\\
=\begin{cases}\begin{array}{ll}\mathbb K^{\rm GUE}(u,v;k)&\mbox{
for $k-\frac{1}{2}<s<k+\frac{1}{2}$, $k\geq 1$}, \\[1.5ex]
0&\mbox{ for $s<\frac{1}{2}$,}
\end{array}
\end{cases}
\end{multline}
with $c$ given by
\begin{equation}\label{c}
c=\frac{1}{\sqrt 2}q_V''(x^*)^{1/2},
\end{equation}
and $\mathbb K^{\rm GUE}$ is given by
\begin{equation}\label{limitingkernel}
\mathbb K^{\rm
GUE}(u,v;k)=\sqrt\frac{k}{2}\ e^{-\frac{u^2+v^2}{2}}\ \frac{H_{k}(u)H_{k-1}(v)-H_{k}(v)H_{k-1}(u)}{u-v},\qquad\mbox{
as $k\geq 1$},
\end{equation}
where we write $H_k$ for the $k$-th degree normalized Hermite polynomial, with leading coefficient $\frac{2^{k/2}}{\pi^{1/4}\sqrt{k!}}$,
orthonormal with respect to the weight $e^{-x^2}$.
\end{theorem}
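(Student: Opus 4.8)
The plan is to perform a Deift--Zhou steepest descent analysis of the Riemann--Hilbert (RH) problem for the orthonormal polynomials associated with $e^{-nV_t}$, and to read off the asymptotics of $K_{n,t}$ from the bilinear representation of the kernel through the $2\times 2$ RH solution $Y=Y_{n,t}$ (whose first row consists of the monic degree-$n$ orthogonal polynomial and its Cauchy transform). The two non-routine ingredients are a careful study of the equilibrium measure in the double scaling window, and the construction of a new local parametrix near $x^*$ built out of Hermite functions.

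\emph{Step 1: equilibrium measure and local coordinates.} I would first describe $\rho_{V_t}$, its $g$-function and the associated $\phi$-function for $|t-1|\le M\log n/n$, together with the endpoints $a_t,b_t$ and (for $t>1$) $\alpha_t,\beta_t$. Near $x^*$ one has $q_V(x)=c^2(x-x^*)^2+\bigO((x-x^*)^3)$ with $c$ as in (\ref{c}), and by (\ref{varQ}) and its two-band analogue $q_{V_t}(x)\approx c^2\bigl((x-x^*)^2-\delta_t^2\bigr)$, where $\delta_t$ (the half-width of the newborn band, for $t>1$) is fixed by the variational conditions, in particular by requiring the Lagrange constant $\ell_t$ to take the same value on both bands. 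The decisive computation is that this matching condition produces a logarithm: expanding the relevant integral of $\sqrt{q_{V_t}}$ across the gap near $x^*$ gives $\tfrac{c}{4}\,\delta_t^2\log(1/\delta_t^2)=J(t-1)\bigl(1+o(1)\bigr)$, with $J=\int_b^{x^*}\bigl((x-a)(x-b)\bigr)^{-1/2}dx$ (which is also, up to a factor $\tfrac12$, the $t$-derivative at $t=1$ of the left side of (\ref{variationalcondition:mu-inequality}) evaluated at $x^*$). Hence $\delta_t^2\asymp 1/n$; the mass of the newborn band equals $\tfrac12 c\delta_t^2=s/n+o(1/n)$ with $s$ as in the theorem; $\zeta:=(cn)^{1/2}(z-x^*)$ is the correct local variable, sending $[\alpha_t,\beta_t]$ to approximately $[-\sqrt{2s},\sqrt{2s}]$; and the microscopic number of eigenvalues near $x^*$ (rather than the mean-field mass $s$) turns out to be the integer $k=\lfloor s+\tfrac12\rfloor$, which is also the number of zeros of $p_n^{(n)}$ that end up near $x^*$ and which jumps precisely at half-integer values of $s$.

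\emph{Step 2: outer, Airy, and Hermite parametrices.} Then the usual chain $Y\to T\to S$: normalize at infinity via the $g$-function and open lenses around $[a_t,b_t]$ and the newborn band. Since $b<x^*$ is a regular one-cut configuration and there are no other singular points, the outer parametrix $N$ is the explicit one-cut parametrix on $[a_t,b_t]$, and near $a_t,b_t$ one uses the standard Airy parametrices; the constant off-diagonal jump on the newborn band, together with the gap jumps that fail to decay in a fixed disc $D$ around $x^*$, are absorbed into a local parametrix on $D$. In the variable $\zeta$ the jumps of $S$ on $D$ converge to those of a model RH problem whose exponential factors are governed by the local curve $w^2=\zeta^2-2s$; the point is that this model is solved by a $2\times2$ matrix built from the Hermite functions $e^{-\zeta^2/2}H_{k}(\zeta)$ and $e^{-\zeta^2/2}H_{k-1}(\zeta)$ --- equivalently, from the RH problem for the degree-$k$ finite GUE --- the integer $k$ being forced by single-valuedness and the jump structure, and the analytic prefactor needed to match $N$ on $\partial D$ existing because of the explicit form of $N$ and the asymptotics of the local phase. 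For $s<\tfrac12$ one has $k=0$: the local parametrix degenerates to a unipotent matrix and contributes nothing to the kernel.

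\emph{Step 3: small norm and conclusion.} Finally $R:=SN^{-1}$ off the discs and $R:=S(\text{local parametrix})^{-1}$ on them solves an RH problem whose jumps are $I+\bigO(n^{-1/2}\log n)$ on the disc boundaries and exponentially small elsewhere, uniformly for $s$ in a compact set bounded away from the half-integers, so $R=I+o(1)$. Unwinding $R\to S\to T\to Y$ and substituting into the kernel formula, all the conjugating factors collapse and the contribution near $x^*$ reduces, via the Christoffel--Darboux identity (\ref{kernel2}) applied to the Hermite parametrix, to $\mathbb K^{\rm GUE}(u,v;k)$ as in (\ref{limitingkernel}), the constant $c$ of (\ref{c}) entering through the conformal map $\zeta(z)$. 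I expect the hard parts to be (i) Step 1 --- extracting the logarithm, identifying the constant $J$, showing $\delta_t^2\sim 2s/(cn)$ and that the relevant count is the integer $k$, and making the whole analysis uniform over the window --- and (ii) the construction of the Hermite parametrix and the proof that its boundary behaviour matches $N$ to the needed order ($\bigO(n^{-1/2}\log n)$ rather than $\bigO(1/n)$, which still closes the small-norm argument), including the verification that its index is $k$ and that the resulting kernel is genuinely independent of $s$ inside $(k-\tfrac12,k+\tfrac12)$.
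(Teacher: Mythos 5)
Your overall architecture (Deift/Zhou analysis, a Hermite-type local parametrix of index $k$ near $x^*$, the scaling variable $s$, the constant $c$, the breakdown at half-integers) coincides with the paper's, and your Step 1 heuristic that the newborn band carries mass $\approx s/n$ is consistent with the paper's choice of $m_{n,t}$. The genuine gap is in Step 2, and it comes from your decision to run the analysis with the true two-band equilibrium measure $\rho_t$. With that choice, inside the fixed disc around $x^*$ the transformed problem carries a shrinking band $[\alpha_t,\beta_t]$ of local image $\approx[-\sqrt{2s},\sqrt{2s}]$ whose charge $n\rho_t([\alpha_t,\beta_t])\approx s$ is generically \emph{not} an integer, together with a constant diagonal jump $e^{\mp 2\pi i\, n\rho_t([\alpha_t,\beta_t])}$ on the gap entering the disc. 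A $2\times2$ solution built from Hermite polynomials of integer degree $k$ cannot satisfy these jumps exactly: ``single-valuedness and the jump structure'' do not force an integer, because the local charge really is $s$, and the Hermite band sits at $[-\sqrt{2k},\sqrt{2k}]$, not $[-\sqrt{2s},\sqrt{2s}]$. As written, your $R$-problem then has jump mismatches \emph{inside} the fixed disc that are not small (they involve oscillatory exponentials and a phase defect of size $|s-k|$), so the small-norm argument does not close; moreover, with $\alpha_t,\beta_t$ colliding at $x^*$ the $\phi$-functions of the true measure have branch points merging at $x^*$, so even defining the conformal map for a local parametrix becomes problematic. This is precisely the difficulty the paper avoids.

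The missing device is the paper's modified measure: replace $\rho_t$ by the constrained one-band equilibrium measure $\mu_{n,t}$ of mass $1-\frac{s}{n}$ (support kept away from $x^*$) plus a point mass $\frac{s}{n}\delta_{x_{n,t}^*}$ at the zero $x_{n,t}^*$ of $h_{n,t}$. Then $\phi_{n,t}$ is analytic at $x_{n,t}^*$, $\phi_{n,t}(z)-\phi_{n,t}(x_{n,t}^*)$ has a double zero there, so $f_{n,t}(z)^2=2\phi_{n,t}(x_{n,t}^*)-2\phi_{n,t}(z)$ is a genuine conformal map, and the dressing $\Psi(\sqrt n f_{n,t}(z);k)\,e^{-n\phi_{n,t}(z)\sigma_3}(z-x_{n,t}^*)^{-\nu\sigma_3}$ satisfies the \emph{exact} jumps of $S$ near $x^*$ for arbitrary real $\nu=s$; the non-integrality of $s$ is pushed entirely into the matching on $\partial U_{x^*}$, where it is controlled by the key estimate $n\phi_{n,t}(x_{n,t}^*)=\frac{\nu}{2}\log n+\bigO(1)$ (Proposition \ref{proposition phix}, proved via the Buyarov--Rakhmanov formula), yielding a matching error $\bigO(n^{-1/2+|\Delta|})$ with $\Delta=s-k$ --- note this is a power-law degradation, not the $\bigO(n^{-1/2}\log n)$ you anticipate, and it is exactly why uniformity fails at half-integer $s$. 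Unless you supply an equivalent mechanism (either this point-mass construction, or some other scalar dressing that converts the non-integer charge into a pure matching issue and a workable local coordinate), your local parametrix step does not go through for non-integer $s$, which is the generic case.
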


\begin{remark}
Obviously, the limits in (\ref{kernel1}) do not hold not uniformly for $s$ near a half positive integer. They do hold uniformly for $s$ bounded and away from arbitrary small fixed neighborhoods of the half positive integers.
\end{remark}

\begin{remark}
$\mathbb K^{\rm GUE}(.,.;k)$ is the eigenvalue correlation kernel
for the $k\times k$ GUE, which is the random matrix ensemble (\ref{randommatrixensemble}) for the potential $V(x)=x^2$. After re-scaling, the eigenvalues in the newborn cut seem to behave
asymptotically in the same way as the eigenvalues in a finite GUE.
\end{remark}

\begin{remark}
For $s<1/2$, the limiting eigenvalue correlation kernel is trivial. This is
not surprising since no eigenvalues are expected in the vicinity
of $x^*$ for $t<1$. A first eigenvalue near $x^*$ is only expected when $s$ approaches $1/2$. Each time we shift $s$ with $1$, an
additional eigenvalue is expected in the new cut.
\end{remark}

\medskip

Theorem \ref{theorem: universality} gives us the limiting
eigenvalue correlation kernel in the case where $s$ is not a half
positive integer. It is natural to ask what happens when $s$ is
close to a half positive integer. It seems that, if $s$ increases and passes a half positive integer, an additional eigenvalue is picked up by the new cut with high probability, which leads to a kernel corresponding to a GUE of larger size. Near the half integers, a remarkable transition takes place, involving a limiting kernel interpolating between a GUE kernel for matrices of size $k$ and $k+1$.

\medskip

\begin{theorem}\label{theorem: universality 2}
Under the same conditions as in Theorem \ref{theorem:
universality}, there exist sequences $\lambda_{n,t}^\pm$ such that the following asymptotics hold in the
double scaling limit,
\begin{multline}\label{asymptotic expansion kernel}
\frac{1}{(cn)^{1/2}}K_{n,t}\left(x^*+\frac{u}{(cn)^{1/2}},x^*+\frac{v}{(cn)^{1/2}}\right)\\
=\begin{cases}\begin{array}{ll}\lambda_{n,t}^-\mathbb K^{\rm
GUE}(u,v;k)& \\
\qquad\qquad+\lambda_{n,t}^+\mathbb K^{\rm
GUE}(u,v;k+1)+\bigO\left(\frac{\log n}{n^{1/2}}\right),&\mbox{
for $k\leq s \leq k+1$, $k\geq 0$,}\\[1.5ex]
\bigO(n^{-1/2}),&\mbox{ for $s<0$.}
\end{array}
\end{cases}
\end{multline}
Furthermore the sequences $\lambda_{n,t}^\pm$ are such that
\[\lambda_{n,t}^+ + \lambda_{n,t}^-=1,\] and
\begin{align}
\label{lambda1}&\lambda_{n,t}^+=1-\lambda_{n,t}^-=\bigO(n^{-1/2+s-k}), &\mbox{ as $k\leq s \leq k+1/2,$}\\
\label{lambda2}&\lambda_{n,t}^-=1-\lambda_{n,t}^+=\bigO(n^{1/2+k-s}), &\mbox{ as $k+1/2\leq s \leq k+1.$}
\end{align}
For simplicity in notation, we have written $K^{\rm GUE}(u,v;0)=0$ in (\ref{asymptotic expansion kernel}). The expansion (\ref{asymptotic expansion kernel}) holds uniformly for $s$ bounded.
\end{theorem}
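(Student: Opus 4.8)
The plan is to rerun the Riemann--Hilbert analysis that proves Theorem \ref{theorem: universality}, keeping track this time of the subleading part of the local parametrix at $x^*$, which is precisely what generates the second GUE kernel. Recall the usual chain $Y\mapsto T\mapsto S\mapsto R$ for the polynomials orthogonal with respect to $e^{-nV_t}$: one normalizes at infinity with the one-cut $g$-function of $[a_t,b_t]$, opens lenses around $[a_t,b_t]$, and matches the resulting $S$ by an outer parametrix $N$ away from $x^*$ and by a local parametrix $P$ in a fixed disk $D_\delta(x^*)$, with $R:=S\cdot(N\text{ or }P)^{-1}$ solving a small-norm problem. Writing $\phi_t(z)=\int_{b_t}^{z}q_{V_t}^{1/2}(y)\,dy$ for $z>b_t$, so that $\phi_t\ge 0$ there, $\phi_1$ has a zero of order two at $x^*$, and $\phi_t$ has a small dip of depth $\sim s\log n/(2n)$ near $x^*$ for $t$ slightly above $1$, the jumps of $S$ in $D_\delta(x^*)$ are governed by the factor $e^{-2n\phi_t}$. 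In the rescaled variable $\zeta=\zeta_n(z)$ defined by $\zeta_n(z)^2=2n\phi_t(z)+s\log n$, so that $\zeta_n(z)=(cn)^{1/2}(z-x^*)+\bigO(\log n/n^{1/2})$ in the scaling window, this factor is $n^{s}e^{-\zeta^2}$ up to lower order, and the model Riemann--Hilbert problem for the building block $\Psi$ of $P$ is in essence the one for the Gaussian Unitary Ensemble with weight $e^{-\zeta^2}$. The solution that can be matched with $N$ is the one whose growth at infinity has ``size'' $k=\lfloor s+\tfrac12\rfloor$; Theorem \ref{theorem: universality} amounts to keeping only this contribution, which is legitimate precisely when $s$ stays away from $k+\tfrac12$. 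For Theorem \ref{theorem: universality 2} one keeps also the contribution of size $k+1$: $\Psi$ is the exact model solution whose pertinent entries are explicit combinations of the normalized Hermite polynomials $H_k,H_{k+1}$ (and of $H_{k-1},H_k$ for its Christoffel--Darboux partner), with $n$- and $s$-dependent coefficients.

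First I would solve this model problem explicitly and write its coefficients down. Substituting $P$ into the Christoffel--Darboux representation (\ref{kernel2}) and unwinding the transformations back to $Y$ then gives, for $z=x^*+u/(cn)^{1/2}$ and $z'=x^*+v/(cn)^{1/2}$, an expression of the form
\[
\frac{1}{(cn)^{1/2}}K_{n,t}(z,z')
=\lambda_{n,t}^-\,\mathbb K^{\rm GUE}\!\left(\zeta_n(z),\zeta_n(z');k\right)
+\lambda_{n,t}^+\,\mathbb K^{\rm GUE}\!\left(\zeta_n(z),\zeta_n(z');k+1\right)
+(\text{cross terms})+(\text{contribution of }R-I),
\]
with $\lambda_{n,t}^\pm\ge 0$ and $\lambda_{n,t}^++\lambda_{n,t}^-=1$, the last identity being forced by the unimodularity of $\Psi$. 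The cross terms are products of Hermite functions of degrees differing by two; using the three-term recurrence $\zeta H_j(\zeta)=\sqrt{(j+1)/2}\,H_{j+1}(\zeta)+\sqrt{j/2}\,H_{j-1}(\zeta)$, orthogonality, and the explicit sizes of the coefficients, one checks that they are absorbed into the error, uniformly for $s$ bounded, so that only the two GUE kernels survive. The bounds (\ref{lambda1})--(\ref{lambda2}) follow from estimating $\lambda_{n,t}^+/\lambda_{n,t}^-$: it is comparable, up to bounded factors, to $n^{\,s-k-1/2}$, the exponent being dictated by the height $e^{-2n\phi_t}\sim n^{s}$ of the nascent bump together with the relative normalization of the Hermite systems of sizes $k$ and $k+1$; hence $\lambda_{n,t}^+=\bigO(n^{-1/2+s-k})$ for $k\le s\le k+\tfrac12$, $\lambda_{n,t}^-=\bigO(n^{1/2+k-s})$ for $k+\tfrac12\le s\le k+1$, and for $s<0$ one gets $\lambda_{n,t}^+=\bigO(n^{-1/2+s})$, so that the convention $\mathbb K^{\rm GUE}(\cdot,\cdot;0)=0$ yields the bound $\bigO(n^{-1/2})$. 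Finally the remainder is $\bigO(\log n/n^{1/2})$: the contribution of $R-I$ is $\bigO(n^{-1/2})$, because $PN^{-1}-I=\bigO(1/\zeta)=\bigO(n^{-1/2})$ on $\partial D_\delta(x^*)$, uniformly for bounded $s$; and replacing $\zeta_n(z),\zeta_n(z')$ by $u,v$ in the smooth kernels $\mathbb K^{\rm GUE}$ costs $\bigO(\log n/n^{1/2})$, the logarithm entering through the displacement $\bigO(t-1)=\bigO(\log n/n)$ of the center of the nascent cut relative to $x^*$.

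The hardest part will be the construction and the uniform control, for all bounded $s$, of the model solution $\Psi$ and of $P$ across the half-integer values of $s$: there the sizes $k$ and $k+1$ are equally relevant, so neither the size-$k$ nor the size-$(k+1)$ model alone matches $N$ to within a small-norm error, and one must verify that the full model solution still produces $PN^{-1}=I+\bigO(n^{-1/2})$ on $\partial D_\delta(x^*)$, uniformly in $s$. A second, more computational point is the exact cancellation (up to the error) of the Hermite cross terms, and with it the identity $\lambda_{n,t}^++\lambda_{n,t}^-=1$ rather than $1+o(1)$. Once these are in hand, the estimates (\ref{lambda1})--(\ref{lambda2}) and the uniformity of the $\bigO(\log n/n^{1/2})$ remainder for $s$ bounded are bookkeeping on top of what is already done for Theorem \ref{theorem: universality}.
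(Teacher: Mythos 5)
There is a genuine gap, and it sits exactly where you place "the hardest part." Your mechanism for producing the two-kernel decomposition is an interpolating model solution $\Psi$ whose entries are combinations of $H_k$ and $H_{k+1}$ with $n$- and $s$-dependent coefficients, together with the claim that this $P$ matches the outer parametrix to $I+\bigO(n^{-1/2})$ on $\partial U_{x^*}$ \emph{uniformly} in $s$, including at half-integers. You neither construct this object nor verify the matching, and the surrounding claims that are supposed to make it harmless are themselves unsupported: an exact orthogonal-polynomial model has integer degree, so a mixed $H_k/H_{k+1}$ solution changes the growth at infinity and the whole matching analysis; the identity $\lambda_{n,t}^++\lambda_{n,t}^-=1$ does not follow from $\det\Psi\equiv 1$ (unimodularity constrains the matrix, not the coefficients in a kernel decomposition); and "the Hermite cross terms cancel by the three-term recurrence and orthogonality" is not tied to any identified computation. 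Moreover, your statement that the contribution of $R-I$ is $\bigO(n^{-1/2})$ uniformly for bounded $s$ presupposes precisely the uniform matching you left open: with the degree-$k$ parametrix the matching is only $\bigO(n^{-1/2+|\Delta|})$, which degenerates to $\bigO(1)$ at $\Delta=\pm\tfrac12$, so the error analysis as you state it does not close. Since (\ref{asymptotic expansion kernel}) and the exponents in (\ref{lambda1})--(\ref{lambda2}) all hinge on these points, the proposal asserts the theorem rather than proves it. (Your use of the unmodified one-cut $g$-function, with the $n^{s}$ blow-up pushed into the local weight, is a secondary deviation that would also need verification; the paper's atom at $x_{n,t}^*$ and Proposition \ref{proposition phix} are what actually control the $n^{\pm\Delta}$ factors.)

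For contrast, the paper never builds an interpolating model: it keeps the degree-$k$ Hermite parametrix throughout, accepts $R=I+\bigO(n^{-1/2+|\Delta|})$ as in (\ref{asymptotics R}), and extracts the second kernel from the explicitly computed subleading term. Concretely, (\ref{definition Ehat}) and Proposition \ref{proposition phix} give the refined estimate (\ref{ER3}), $E_{n,t}^{-1}(y)R(y)R^{-1}(x)E_{n,t}(x)=I+2\pi i\,c_{n,t}^{\pm}(x-y)n^{\pm\Delta}\,N_\pm+\bigO(x-y)$ with $N_\pm$ strictly triangular and $c_{n,t}^{\pm}$ bounded; inserting this into (\ref{KinPR3}) produces $\mathbb K^{\rm GUE}(u,v;k)$ plus a rank-one correction proportional to $n^{-1/2\pm\Delta}e^{-\frac{u^2+v^2}{2}}H_k(u)H_k(v)$ (resp.\ $H_{k-1}H_{k-1}$), and the Christoffel--Darboux identity converts kernel-plus-rank-one into $(1-\lambda)\mathbb K^{\rm GUE}(\cdot,\cdot;k)+\lambda\,\mathbb K^{\rm GUE}(\cdot,\cdot;k\pm1)$, with $\lambda^-:=1-\lambda^+$ by definition and the bounds (\ref{lambda1})--(\ref{lambda2}) read off from (\ref{lambda3})--(\ref{lambda4}). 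If you want to pursue your route, the real work is to construct the improved (subleading-corrected) local parametrix and redo the small-norm argument uniformly in $s$ across half-integers; alternatively, adopt the paper's device of computing the first correction and using Christoffel--Darboux, which avoids the uniform-matching problem altogether.
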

\begin{remark}
Note that (\ref{asymptotic expansion kernel}) is compatible with (\ref{kernel1}). Indeed one observes that, for $s$ away from a half positive integer, either $\lambda_{n,t}^+$ of $\lambda_{n,t}^-$ tends to $0$, so that, in the limit, we are only left with one of the two kernels $\mathbb K^{\rm
GUE}(u,v;k)$ and $\mathbb K^{\rm
GUE}(u,v;k+1)$. This is exactly what is stated in Theorem \ref{theorem: universality}.
Unfortunately, we are not able to give simple formulas for the sequences $\lambda_{n,t}^\pm$. We can give formulas for them, as we will do in Section \ref{section proof}, but we have been unable to reduce those formulas to simple expressions.
\end{remark}
\begin{remark}
For $s$ near a half positive integer, the limiting kernel changes abruptly. If we would put
\[s=k+\frac{1}{2}+\frac{\xi}{\log n},\]
we expect that this would lead us to a limiting kernel of the form
\[(1-\lambda)\mathbb K^{\rm
GUE}(u,v;k)+\lambda\mathbb K^{\rm
GUE}(u,v;k+1),\] with $0 \leq \lambda \leq 1$.
This is a kernel which corresponds to a determinantal point process as well, just like the GUE-kernels. The relevant point process is the one where we have $k$ GUE-eigenvalues with probability $1-\lambda$ , and $k+1$ GUE-eigenvalues with probability $\lambda$.
\end{remark}

\begin{example}
A concrete example of a random matrix ensemble where a singular edge
point is present, was given in \cite{Eynard}. For the potential
\begin{equation}
V(x)=\frac{1}{1+e\tilde e}\left(\frac{1}{4}x^4-\frac{e+\tilde
e}{3}x^3+\frac{e\tilde e -2}{2}x^2+2(e+\tilde e)x\right),
\end{equation}
with $e>2$ and $\tilde e$ such that $\int_2^e(x-e)(x-\tilde
e)\sqrt{x^2-4}\,dx=0$, there is a singular exterior point at
$x^*=e$, where $q_V$ has a double zero. For a potential of degree
less then $4$, singular points cannot occur. In order to construct
an example for which there is a singular exterior point where
$q_V$ has a zero of order $4m-2$, a potential $V$ of
degree at least $2m+2$ is needed. In those higher order cases, we expect that double scaling limits can be tuned in such a way that the limiting eigenvalue correlation kernel is no longer related to Hermite polynomials, but to polynomials orthogonal with respect to a weight of the form $e^{-P(x)}$, where $P$ can be any polynomial of degree $2m$.
\end{example}

\subsection{Outline for the rest of the paper}

In Section \ref{section equilibrium}, we will construct equilibrium measures which correspond to the potential $V_t$. For technical reasons, we need modified measures compared to the usual equilibrium measures used in e.g.\ \cite{Deift, DKMVZ2, DKMVZ1}, and also different from the modified measures used in \cite{CK, CV2}.
In Section \ref{section RH}, we recall the Riemann-Hilbert (RH) problem for orthogonal polynomials introduced by Fokas, Its, and Kitaev \cite{FokasItsKitaev}. We follow the ideas of the Deift/Zhou steepest descent method \cite{DZ1} in order to find asymptotics for the orthogonal polynomials. Here we follow similar lines as in \cite{Deift, DKMVZ2, DKMVZ1}, with however two major differences. The first one is, as already mentioned, the use of modified equilibrium measures, and the second one is the construction of a local parametrix near the singular point $x^*$. For this construction, we will need, in Section \ref{section: local}, a model RH problem built out of Hermite polynomials.
The local parametrix will enable us to find asymptotics for the orthogonal polynomials near $x^*$. In Section \ref{section proof}, we will use those asymptotics to obtain asymptotics for the eigenvalue correlation kernel and to prove Theorem \ref{theorem: universality} and Theorem \ref{theorem: universality 2}.

\section{Equilibrium measures}\label{section equilibrium}
It is a well-known fact that a $g$-function related to an equilibrium measures plays a crucial
role in the Deift/Zhou steepest descent analysis. As already mentioned in the introduction, the limiting mean eigenvalue
distribution $\rho_t$ is an equilibrium measure in external field $V_t$, and would hence be an obvious candidate to built out the $g$-function.
However this would not be a convenient choice because the endpoints $\alpha_t$ and
$\beta_t$ of the new cut vary with $t$ and both tend to $x^*$ as
$t\searrow 1$. This would create various technical difficulties for the construction of a local parametrix near the critical point $x^*$, which will be the most crucial issue in our RH analysis.

To prevent the presence of endpoints near $x^*$ varying with $t$, we will construct for $t>1$ a modified
equilibrium measure $\mu_{t}$ which we force to have its support
away from $x^*$. The portion of mass of $\rho_t$ near $x^*$ can
however not be ignored, and for this purpose we add to $\mu_t$ a point mass centered at some point close to $x^*$.
The rough idea is that we can 'approximate' the limiting mean
eigenvalue density $\rho_t$ by a probability measure
of the form $\widehat\mu_{n,t}=\mu_{n,t} + m_{n,t}\delta_{x_{n,t}^*}$, where
\begin{itemize}
\item $\mu_{n,t}$ is the positive equilibrium measure in external field $V_t$ with mass $1-m_{n,t}$, with $m_{n,t}$ given by
\begin{equation}\label{definition mnt}
m_{n,t}=\max\left\{\frac{s}{n},0\right\}=\begin{cases}\begin{array}{ll}
2\frac{t-1}{\log n}\ \int_b^{x^*}\hspace{-0.2cm}\frac{1}{\sqrt{(s-a)(s-b)}}ds&\mbox{ as $t>1$},\\
0 &\mbox{ as $t\leq 1$.}\end{array}\end{cases}
\end{equation}
\item $\delta_{x_{n,t}^*}$ is the Dirac distribution centered with mass $1$ at
$x_{n,t}^*$,
\item $x_{n,t}^*$ is a point near $x^*$ which we will
determine below.
\end{itemize}

To be precise in defining $\mu_{n,t}$, it is the equilibrium measure minimizing
the logarithmic energy $I_{V_t}(\mu)$, defined by
(\ref{definition: energy}), among all positive measures $\mu$ satisfying
the following two conditions: \begin{itemize}\item
$\supp\,\mu\subset \mathbb R\setminus[x^*-\epsilon, x^*+\epsilon]$
for some sufficiently small fixed $\epsilon>0$, \item $\mu(\mathbb
R)=1-m_{n,t}$, with $m_{n,t}$ defined by (\ref{definition mnt}).
\end{itemize}
The equilibrium measure $\mu_{n,t}$ does not depend on the choice of
$\epsilon$, at least not if $\epsilon$ is sufficiently small so that $[x^*-\epsilon, x^*+\epsilon]$ does not intersect with $[a,b]$, and if $t$ is sufficiently close to $1$.
Note also that, for $t\leq 1$, our new equilibrium measure is exactly equal to $\rho_t$. The measure $\mu_{n,t}$ is thus independent of $n$ for $t\leq 1$.

\medskip

Since there are no singular points in external field $V$ any longer after the exclusion of $[x^*-\epsilon, x^*+\epsilon]$, it was shown in \cite{KM} that for $t$ sufficiently close to $1$, $\supp\,\mu_{n,t}:=[a_{n,t}',b_{n,t}']$ consists of one single interval.
It is a standard fact \cite{SaTo} that $\mu_{n,t}$ satisfies the following variational conditions,
with $V_t=V/t$, for $t$ sufficiently close to $1$,
\begin{align}
    \label{variationalcondition:hatmu-equality}
    &2\int \log |x-y|d\mu_{n,t}(y)-V_t(x)=\ell_{n,t},
        \qquad\mbox{for $x\in [a_{n,t}',b_{n,t}']$,}\\
    \label{variationalcondition:hatmu-inequality}
     &2\int \log |x-y|d\mu_{n,t}(y)-V_t(x)< \ell_{n,t},
       \qquad\mbox{for $x\in \mathbb R\setminus ([a_{n,t}',b_{n,t}']\cup
       [x^*-\epsilon, x^*+\epsilon])$.}
\end{align}
Those variational conditions will be crucial throughout the following sections.

\medskip

We can directly apply the results obtained in \cite{DKM} to conclude that the density $\psi_{n,t}$ of
$\mu_{n,t}$ can be written in the form
\begin{equation}\label{definition hatQ}
\psi_{n,t}(x)=\frac{1}{\pi}\sqrt{Q_{n,t}^-(x)},
\end{equation}
where $Q_{n,t}$ is a real analytic function with negative part
$Q_{n,t}^-$. Also it follows from \cite{DKM}
that
\begin{equation}\label{DKM identity q}
Q_{n,t}(z)=\left(\frac{V'(z)}{2t}\right)^2-\frac{1}{t}\int
\frac{V'(z)-V'(y)}{z-y}d\mu_{n,t}(y).
\end{equation}
Using this identity, weak* convergence of $\mu_{n,t}$ to $\rho_1=\mu_{n,1}$
is enough to conclude that $Q_{n,t}(z)\to q_V(z)$ uniformly on
compact sets in a neighborhood of the real line, as $t\to 1$ and $n\to \infty$. This means that
$Q_{n,t}$ has simple zeros $a_{n,t}'$ and $b_{n,t}'$ tending to
$a$ and $b$ as $t\to 1$, $n\to\infty$. However we can do more. It follows from a result by Buyarov and Rahmanov \cite{BR} that
\begin{equation}
\mu_{n,t}-\mu_{n,1}=\bigO(t-1),
\end{equation}
Using (\ref{DKM identity q}) we now have that
$Q_{n,t}(x)-q_V(x)=\bigO(t-1)$ uniformly on compact sets, which
implies that $a_{n,t}'=a+\bigO(t-1)$ and $b_{n,t}'=b+\bigO(t-1)$.

\medskip

It is convenient to rewrite (\ref{definition hatQ}) in the following way,
\begin{equation}\label{hatrho-h}
\psi_{n,t}(x)=\frac{1}{\pi}\sqrt{(b_{n,t}'-x)(x-a_{n,t}')}\
h_{n,t}(x)\chi_{[a_{n,t}', b_{n,t}']}(x),
\end{equation}
where $h_{n,t}$ is real analytic on $\mathbb R$ and
$h_{n,t}=h_{n,1}+\bigO(t-1)$ uniformly on compact sets as $t\to
1$, $n\to\infty$. Now it follows from formula $(\ref{def Q})$ and
the fact that $q_V$ has a double zero at $x^*$, that $h_{n,1}$ has
a simple zero at $x^*$. It then follows that $h_{n,t}$ must have a
zero $x_{n,t}^*=x^*+\bigO(t-1)$ as $t\to 1$, $n\to\infty$. This
zero $x_{n,t}^*$ is the point where we center the Dirac measure
approximating the portion of the limiting mean eigenvalue distribution in the new cut. Note that the point
$x_{n,t}^*$ is a double zero of $Q_{n,t}$, but it is not a singular
point, cf.\ Remark \ref{remark 0}.

\medskip

So far, we did not give any arguments why we need to choose $m_{n,t}$ and $x_{n,t}^*$ in the way we did. Only at the very end of the RH analysis, in the construction of a local parametrix near $x^*$ in Section \ref{section: local}, it will become clear that only these choices for $m_{n,t}$ and $x_{n,t}^*$ do the job. Except for the construction of the local parametrix, the RH analysis would also work for other values for $m_{n,t}$ and $x_{n,t}^*$.

\section{Riemann-Hilbert analysis}\label{section RH}
The starting point of our analysis is the RH problem for
orthogonal polynomials introduced by Fokas, Its, and Kitaev
\cite{FokasItsKitaev}. We follow the approach of \cite{Deift,
DKMVZ2, DKMVZ1}, where the Deift/Zhou steepest descent method
\cite{DZ1} was used in order to find large $n$ asymptotics for the
solution of this RH problem. The main idea is to apply a series of
transformations to the RH problem in order to find, at the end, a
RH problem which can be solved approximately for large $n$. The
crucial new feature in our analysis is, besides the modification
of the equilibrium measure described in the previous section, the
construction of a local parametrix near the singular exterior
point $x^*$ in a double scaling limit. This will be done in
Section \ref{section: local} using the RH problem for Hermite polynomials.
The remaining part of the Deift/Zhou
steepest descent analysis follows similar lines as the ones
developed in \cite{Deift, DKMVZ2, DKMVZ1} and later also applied
to double scaling limits in \cite{CK, CKV, CV2}.

\medskip

For technical reasons, we assume that the confining potential $V$ is such
that the support of the limiting mean eigenvalue density consists of one single interval, $\supp\,\rho_V=[a,b]$.
We also assume that the singular point lies at the right side of the spectrum, $x^*>b$, which we can do without loss of
generality because of the possibility to consider the potential
$V(-x)$ instead of $V(x)$.
Another technical restriction is that we assume the absence of any other singular point besides $x^*$.

\subsection{RH problem for orthogonal polynomials}

For each $n$ and $t$, we consider the following RH problem. We seek
for a $2\times 2$ matrix-valued function $Y(z)=Y(z;n,t)$ satisfying
the following conditions.

\subsubsection*{RH problem for $Y$:}

\begin{itemize}
    \item[(a)] $Y:\mathbb{C}\setminus \mathbb{R}\to\mathbb{C}^{2\times 2}$ is analytic,
    \item[(b)] $Y$ has continuous boundary values $Y_\pm(x)$
        for $x\in\mathbb{R}$, where $Y_+(x)$ and $Y_-(x)$ denote the limiting values when
        approaching $x$ from above and below, and
        \begin{equation}\label{RHP Y: b}
            Y_+(x)=Y_-(x)
                \begin{pmatrix}
                    1 & e^{-nV_{t}(x)} \\
                    0 & 1
                \end{pmatrix},
                \qquad  \mbox{for $x \in\mathbb{R}$.}
        \end{equation}
    \item[(c)] $Y$ has the following asymptotic behavior at
    infinity,
        \begin{equation}\label{RHP Y: c}
            Y(z)=\left(I+\bigO(z^{-1})\right)
                \begin{pmatrix}
                    z^n & 0 \\
                    0 & z^{-n}
                \end{pmatrix},
                \qquad  \mbox{as $z\rightarrow \infty$.}
        \end{equation}
\end{itemize}

\medskip

This RH problem has a unique solution which is given in terms of
the orthonormal polynomials $p_k=p_k^{(n,t)}$ with respect to the
weight $e^{-nV_t}$ on $\mathbb R$, see \cite{FokasItsKitaev},
\begin{equation}\label{RHP Y: solution}
    Y(z)=
    \begin{pmatrix}
        \kappa_n^{-1}p_n(z) &
            \displaystyle{\frac{\kappa_n^{-1}}{2\pi i}\int_{\mathbb R}\frac{p_n(u)
            e^{- n V_{t}(u)}}{u-z}\,
            du}
        \\[3ex]
        - 2\pi i \kappa_{n-1} p_{n-1}(z) &
            \displaystyle{-\kappa_{n-1} \int_{\mathbb R}\frac{p_{n-1}(u) e^{- n
            V_{t}(u)}}{u-z}\,du}
    \end{pmatrix},\qquad\mbox{for $z\in\mathbb C \setminus\mathbb R$.}
\end{equation}
Here we have written $\kappa_k=\kappa_k^{(n,t)}>0$ for the leading
coefficient of $p_k$.

It is straightforward to check that the eigenvalue correlation
kernel $K_{n,t}$, given by (\ref{kernel2}), can be expressed in
terms of $Y$. Using the fact that $\det Y \equiv 1$ (this follows
from the RH conditions for $Y$ using a standard complex analysis
argument), one obtains
\begin{equation}\label{KinY}
    K_{n,t}(x,y) =
        e^{-\frac{n}{2}V_{t}(x)}e^{-\frac{n}{2}V_{t}(y)} \frac{1}{2\pi i(x-y)}
        \begin{pmatrix}
            0 & 1
        \end{pmatrix}
        Y_{\pm}^{-1}(y) Y_{\pm}(x)
        \begin{pmatrix}
            1 \\ 0
        \end{pmatrix}.
\end{equation}

From (\ref{KinY}), it is clear that all the information needed to
prove Theorem \ref{theorem: universality} is contained in the RH
solution $Y$. We now need to find sufficiently accurate
asymptotics for $Y$ in the double scaling limit.

\subsection{First transformation $Y\mapsto T$}\label{section: T}

In this section we perform a transformation $Y\mapsto T$ of the RH
problem which normalizes the behavior of the RH solution at infinity in such a way that $T(z)\to I$ as $z\to\infty$. Besides that, the
map will modify the jumps in a convenient way. A crucial role in
the transformation is played by the so-called $g$-function, which is related to the equilibrium measures constructed in Section \ref{section equilibrium}.

The standard way to define the $g$-function would be to put it equal
to $\int\log(z-s)d\rho_{t}(x)$, where $\rho_t$ is the limiting
eigenvalue distribution of the random matrix ensemble. As already
noted in the previous section, this would involve several
technical difficulties. We have already anticipated to these
difficulties by replacing the limiting mean eigenvalue
distribution $\rho_t$ with a modified measure
$\widehat\mu_{n,t}=\mu_{n,t}+m_{n,t}\delta_{x_{n,t}^*}$, where we
approximated the portion of mass in the new cut by a Dirac
measure centered at $x_{n,t}^*$. The new equilibrium measure leads us to the
following $g$-function,
\begin{equation}\label{definition: g}
g(z)=g_{n,t}(z)=\int\log(z-y)d\widehat\mu_{n,t}(y)=\int\log(z-y)d\mu_{n,t}(y)
+ m_{n,t} \log(z-x_{n,t}^*),
\end{equation}
where we take $\log z$ analytic in $\mathbb C\setminus
(-\infty,0]$ with $-\pi <\Im\log z < \pi$.
The variational conditions (\ref{variationalcondition:hatmu-equality})
and (\ref{variationalcondition:hatmu-inequality}) can now be translated to properties for the $g$-functions. For $t$ sufficiently close to $1$, we have that, with $\ell=\ell_{n,t}$,
\begin{align}
    &\label{property g: 1}g_{+}(x)+g_{-}(x)-V_{t}(x)-\ell=2m_{n,t}\log|x_{n,t}^*-x|
    ,\qquad\qquad\mbox{for
    $x\in [a_{n,t}',b_{n,t}']$,}\\
    &g_{+}(x)+g_{-}(x)-V_{t}(x)-\ell-2m_{n,t}\log|x_{n,t}^*-x|<0,\nonumber\\
    &\label{property g: 2}\hspace{5.87cm}\mbox{for
    $x\in \mathbb R\setminus([a_{n,t}', b_{n,t}']
    \cup [x^*-\epsilon, x^*+\epsilon])$.}
\end{align}
Note also that
\begin{equation}\label{property g: 4}
    g_{+}(x)-g_{-}(x)=2\pi i\int_x^{+\infty} d\widehat\mu_{n,t}(y),\qquad
    \mbox{for $x\in\mathbb
    R$,}
\end{equation}
which means in particular that
\begin{equation}\label{property g: 3}
    g_{+}(x)-g_{-}(x)=
    \begin{cases}
        2\pi i, & \mbox{for $x<a_{n,t}'$,} \\
        2\pi im_{n,t}, & \mbox{for $b_{n,t}'<x<x_{n,t}^*$,}\\
        0, & \mbox{for $x>x_{n,t}^*$.}
    \end{cases}
\end{equation}

\medskip

The above properties of the $g$-function are, together with the fact that $e^{ng(z)}=z^n(1+\bigO(1/z))$ as $z\to\infty$, crucial to transform
the RH problem. We define $T$ as follows,
\begin{equation}\label{def T}
    T(z)=e^{-\frac{n}{2}\ell\sigma_3} Y(z) e^{-ng(z)\sigma_3}
    e^{\frac{n}{2}\ell\sigma_3}, \qquad \mbox{for
    $z\in\mathbb{C}\setminus\mathbb R$,}
\end{equation}
where $\sigma_3$ is the third Pauli matrix,
$\sigma_3=\begin{pmatrix}1&0\\0&-1\end{pmatrix}$. With
$\nu=\max\{s,0\}=nm_{n,t},$ it is straightforward to check using (\ref{property g: 1}), (\ref{property g: 2}), and (\ref{property g: 3}), that $T$
satisfies the following RH problem.

\subsubsection*{RH problem for $T$:}
\begin{itemize}
    \item[(a)] $T:\mathbb{C}\setminus \mathbb{R}\to\mathbb{C}^{2\times 2}$ is analytic,
    \item[(b)] $T_+(x)=T_-(x)v_T(x)$ for $x\in\mathbb{R}$, with
        \begin{equation} \label{RHP T: b}
            v_T(x)=
            \begin{cases}
                \begin{pmatrix}
                    1 & e^{n(g_{+}(x)+g_{-}(x)-V_{t}(x)- \ell)} \\
                    0 & 1
                \end{pmatrix}, & \mbox{for $x\in (-\infty, a_{n,t}')\cup
                (x_{n,t}^*,+\infty)$,}\\[3ex]
                \begin{pmatrix}
                    e^{-n(g_{+}(x)-g_{-}(x))} & |x-x_{n,t}^*|^{2\nu} \\
                    0 & e^{n(g_{+}(x)-g_{-}(x))}
                \end{pmatrix}, & \mbox{for $x\in [a_{n,t}',b_{n,t}']$,} \\[3ex]
                \begin{pmatrix}
                    e^{-2\pi i\nu} & e^{n(g_{+}(x)+g_{-}(x)-V_{t}(x)- \ell)} \\
                    0 & e^{2\pi i\nu}
                \end{pmatrix}, & \mbox{for $x\in (b_{n,t}', x_{n,t}^*)$,}
            \end{cases}
        \end{equation}
    \item[(c)] $T(z)=I+\bigO(1/z)$,\qquad as $z\to\infty$,
    \item[(d)] $T(z)(z-x_{n,t}^*)^{\nu\sigma_3}$ is bounded near $x^*$.

\end{itemize}

It is necessary to add condition (d), controlling the behavior of $T$ near $x^*$, in order to have unique RH solution. A similar condition was not stated in the RH problem for $Y$, where we assumed continuous boundary values and thus a bounded RH solution near $x^*$. In exception of $x_{n,t}^*$, $T$ has continuous boundary values on $\mathbb R$ as well.

\medskip

Using (\ref{KinY}) and (\ref{def T}), we find the following identity for the eigenvalue
correlation kernel in terms of the new RH solution $T$,
\begin{equation}\label{KinT}
    K_{n,t}(x,y) =
        e^{-\frac{n}{2}V_{t}(x)}e^{-\frac{n}{2}V_{t}(y)}e^{ng_{+}(x)}
        e^{ng_{+}(y)}e^{-n\ell} \frac{1}{2\pi i(x-y)}
        \begin{pmatrix}
            0 & 1
        \end{pmatrix}
        T_{+}^{-1}(y) T_{+}(x)
        \begin{pmatrix}
            1 \\ 0
        \end{pmatrix}.
\end{equation}

\subsection{Second transformation $T\mapsto S$}\label{section T}

Before going on with the next transformation of the RH problem, we first write the jump matrices $v_T$ in a slightly
different and more convenient way. Let us define a
function $\phi=\phi_{n,t}$, analytic in $\mathbb C\setminus [-\infty, b_{n,t}']$, by
\begin{equation}
\label{def phi} \phi(z)= \int_z^{b_{n,t}'}Q_{n,t}^{1/2}(s)ds=\int_z^{b_{n,t}'}(s-a_{n,t}')^{1/2}(s-b_{n,t}')^{1/2}h_{n,t}(s)ds,
\end{equation}
with $Q_{n,t}$ as in (\ref{definition hatQ}) and $h_{n,t}$ as in (\ref{hatrho-h}), and with a suitable branch of the square root such that
\[\begin{array}{ll}
\Re \phi_\pm(x)<0, & \mbox{ as $x<a_{n,t}'$,}\\[1.5ex]
\pm\Im \phi_\pm(x)>0, & \mbox{ as $x\in[a_{n,t}',
b_{n,t}']$,}\\[1.5ex]
\phi(x)<0, & \mbox{ as $x\in (b_{n,t}',+\infty)\setminus[x^*-\delta,x^*+\delta]$.}
\end{array}\]
Using the definition of $\phi$, (\ref{hatrho-h}), and (\ref{property g: 3}), we observe that
\begin{equation}\label{g phi}g_+(x)-g_-(x)=\pm 2\phi_\pm(x)+2\pi im_{n,t}, \qquad\mbox{ for $x\in[a_{n,t}',b_{n,t}']$.}\end{equation}
On the other hand, by (\ref{property g: 1}), we have that
\[g_\pm(x)-g_\mp(x)=2g_\pm(x)- V_t(x)-\ell- 2m_{n,t}\log |x_{n,t}^*-x|, \qquad\mbox{ for $x\in[a_{n,t}',b_{n,t}']$.}\]
Combining the two above equations and using the identity theorem, we obtain the following identity,
\begin{equation}\label{equation g phi}
2\phi(z)+2m_{n,t}\log(z-x_{n,t}^*)=2g(z)-V_t(z)-\ell,\qquad \mbox{ for $z\in\mathbb C\setminus (-\infty,x_{n,t}^*)$.}
\end{equation}

\medskip

We can now rewrite the jump matrix $v_T$ given by (\ref{RHP T: b}), using (\ref{g phi}) and (\ref{equation g phi}). For $t>1$ we have the following jump matrix,
\begin{equation} \label{RHP T: b2}
            v_T(x)=
            \begin{cases}
            \begin{pmatrix}
                    1 & e^{2\pi i\nu}|x-x_{n,t}^*|^{2\nu}e^{2n\phi_+(x)} \\
                    0 & 1
                \end{pmatrix}, & \mbox{for $x\in (-\infty, a_{n,t}')$,}\\[3ex]
                \begin{pmatrix}
                    e^{-2n\phi_{+}(x)}e^{-2\pi i\nu} & |x-x_{n,t}^*|^{2\nu} \\
                    0 & e^{-2n\phi_{-}(x)}e^{2\pi i\nu}
                \end{pmatrix}, & \mbox{for $x\in [a_{n,t}',b_{n,t}']$,} \\[3ex]
                \begin{pmatrix}
                    e^{-2\pi i\nu} & |x-x_{n,t}^*|^{2\nu}e^{2n\phi(x)} \\
                    0 & e^{2\pi i\nu}
                \end{pmatrix}, & \mbox{for $x\in (b_{n,t}',x_{n,t}^*)$,}\\[3ex]
                \begin{pmatrix}
                    1 & |x-x_{n,t}^*|^{2\nu}e^{2n\phi(x)} \\
                    0 & 1
                \end{pmatrix}, & \mbox{for $x\in (x_{n,t}^*,+\infty)$,}
            \end{cases}
        \end{equation}
and for $t\leq 1$,
\begin{equation} \label{RHP T: b2-2}
            v_T(x)=
            \begin{cases}
                \begin{pmatrix}
                    e^{-2n\phi_{+}(x)} & 1 \\
                    0 & e^{-2n\phi_{-}(x)}
                \end{pmatrix}, & \mbox{for $x\in (a_{n,t}',b_{n,t}')$,} \\[3ex]
               \begin{pmatrix}
                    1 & e^{2n\phi(x)} \\
                    0 & 1
                \end{pmatrix}, & \mbox{for $x\in (-\infty, a_{n,t}')\cup
                (b_{n,t}',+\infty)$.}
            \end{cases}
        \end{equation}
This way of writing down the jump matrix is convenient because it is now clear that we can factorize the jump matrix on $[a_{n,t}',b_{n,t}']$. Indeed, for $x\in[a_{n,t}', b_{n,t}']$, the jump matrix can
be written as follows,
\begin{multline}\label{vt}
v_T(z)=\begin{pmatrix}
                    1 & 0 \\
                    |x-x_{n,t}^*|^{-2\nu}e^{-2n\phi_{-}(x)}e^{2\pi i\nu} & 1
                \end{pmatrix}\begin{pmatrix}
                    0 & |x-x_{n,t}^*|^{2\nu} \\
                    -|x-x_{n,t}^*|^{-2\nu} & 0
                \end{pmatrix}\\
                \times\quad \begin{pmatrix}
                    1 & 0 \\
                    |x-x_{n,t}^*|^{-2\nu}e^{-2n\phi_{+}(x)}e^{-2\pi i\nu} & 1
                \end{pmatrix}.
\end{multline}
Because we can extend the first and the last factor analytically to the lower resp. the upper half plane,
this factorization allows us to deform the jump contour in such a way that the different factors of the jump matrix lie on different curves.
This is very convenient because it allows us to transform jumps that are oscillatory with $n$ on the real line
to jumps that are exponentially decaying on some contour in the complex plane.
We deform the jump contour to a
lens-shaped contour $\Sigma_S$ as shown in Figure \ref{figure:
lens}.

\medskip

We define
\begin{equation}\label{SinT}
S(z)=\begin{cases}
\begin{array}{ll}
T(z)&\mbox{ outside the lens-shaped region,}\\
T(z)\begin{pmatrix}
                    1 & 0 \\
                    -(x_{n,t}^*-z)^{-2\nu}e^{-2n\phi(z)}e^{-2\pi i\nu} & 1
                \end{pmatrix}&\mbox{ in the upper parts of the
                lens,}\\[3ex]
T(z)\begin{pmatrix}
                    1 & 0 \\
                    (x_{n,t}^*-z)^{-2\nu}e^{-2n\phi(z)}e^{2\pi i\nu} & 1
                \end{pmatrix}&\mbox{ in the lower parts of the
                lens.}
\end{array}
\end{cases}
\end{equation}
Due to the opening of the lens, the analytic continuations of the three factors of the jump matrix $v_T$ now live on different contours.
We obtain the following RH problem for $S$.

\subsubsection*{RH problem for $S$:}
\begin{itemize}
    \item[(a)] $S:\mathbb{C}\setminus \Sigma_S \to\mathbb{C}^{2\times 2}$ is analytic,
    \item[(b)] $S_+(x)=S_-(x)v_S(x)$ for $x\in\mathbb{R}$, with
        \begin{equation} \label{RHP S: b}
            v_S(x)=
            \begin{cases}
                \begin{pmatrix}
                    1 & e^{2\pi i\nu}|x-x_{n,t}^*|^{2\nu}e^{2n\phi_+(x)} \\
                    0 & 1
                \end{pmatrix}, & \mbox{for $x\in (-\infty, a_{n,t}')$,}\\[3ex]
                \begin{pmatrix}
                    1 & 0 \\
                    (z-x_{n,t}^*)^{-2\nu}e^{-2n\phi(x)} & 1
                \end{pmatrix}, & \mbox{for $x\in\Sigma_S\cap\mathbb C^{\pm}$,}\\[3ex]
                \begin{pmatrix}
                    0 & |x-x_{n,t}^*|^{2\nu} \\
                    -|x-x_{n,t}^*|^{-2\nu} & 0
                \end{pmatrix}, & \mbox{for $x\in
                [a_{n,t}',b_{n,t}']$,}\\[3ex]
                \begin{pmatrix}
                    e^{-2\pi i\nu} & |x-x_{n,t}^*|^{2\nu}e^{2n\phi(x)} \\
                    0 & e^{2\pi i\nu}
                \end{pmatrix}, & \mbox{for $x\in (b_{n,t}',x_{n,t}^*)$,}\\[3ex]
                \begin{pmatrix}
                    1 & |x-x_{n,t}^*|^{2\nu}e^{2n\phi(x)} \\
                    0 & 1
                \end{pmatrix}, & \mbox{for $x\in (x_{n,t}^*,+\infty)$,}\\[3ex]
            \end{cases}
        \end{equation}
    \item[(c)] $S(z)=I+\bigO(1/z)$,\qquad as $z\to\infty$,
    \item[(d)] $S(z)(z-x_{n,t}^*)^{\nu\sigma_3}$ is bounded near $x^*$.
\end{itemize}

\begin{figure}[t]
\begin{center}
    \setlength{\unitlength}{1mm}
    \begin{picture}(137.5,26)(-2.5,11.5)

        \put(25,25){\thicklines\circle*{.8}} \put(20.2,27){\small $a_{n,t}'$}
        \put(65,25){\thicklines\circle*{.8}} \put(64.8,27){\small $b_{n,t}'$}
        \put(86,25){\thicklines\circle*{.8}} \put(86,27){\small $x_{n,t}^*$}

        \put(77,25){\thicklines\vector(1,0){.0001}}
        \put(99,25){\thicklines\vector(1,0){.0001}}
        \put(9.6,25){\line(1,0){100}} \put(18,25){\thicklines\vector(1,0){.0001}}
        \put(46,25){\thicklines\vector(1,0){.0001}}

        \qbezier(25,25)(45,45)(65,25) \put(46,35){\thicklines\vector(1,0){.0001}}
        \qbezier(25,25)(45,5)(65,25) \put(46,15){\thicklines\vector(1,0){.0001}}

    \end{picture}
    \caption{The lens-shaped contour $\Sigma_S$}
    \label{figure: lens}
\end{center}
\end{figure}
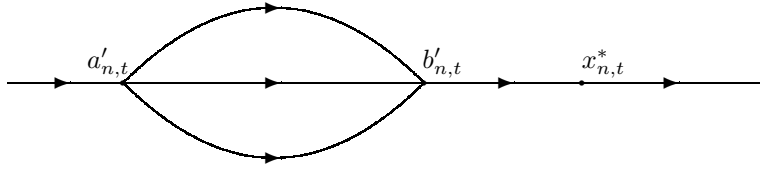

Using the fact that $S(z)=T(z)$ for $z$ outside the lens-shaped
region, and (\ref{equation g phi}), we find from (\ref{KinT}) that,
for $x,y$ in some sufficiently small fixed neighborhood of $x^*$,
\begin{equation}\label{KinS}
    K_{n,t}(x,y)
        =\frac{(x-x_{n,t}^*)_+^{\nu}(y-x_{n,t}^*)_+^{\nu}
        e^{n\phi_{+}(x)}e^{n\phi_{+}(y)}}{2\pi i(x-y)}
        \begin{pmatrix}
            0 & 1
        \end{pmatrix}
        S_{+}^{-1}(y) S_{+}(x)
        \begin{pmatrix}
            1 \\ 0
        \end{pmatrix}.
\end{equation}

\subsection{Construction of parametrices}

If we take $z$ away from arbitrary small but fixed neighborhoods surrounding $a$, $b$,
and $x^*$, it can be checked as e.g.\ in \cite{Deift} using the Cauchy-Riemann conditions that, for
$n\to\infty$, the jump matrix $v_S(z)$ converges exponentially
fast to a jump matrix $v^{(\infty)}(z)$ given by
\begin{equation}
v^{(\infty)}(x)=\begin{cases}\begin{array}{ll}
\begin{pmatrix}
0& |x-x_{n,t}^*|^{2\nu}\\
-|x-x_{n,t}^*|^{-2\nu}&0
\end{pmatrix},&\mbox{ for $x\in [a_{n,t}',b_{n,t}']$,}\\
e^{-2\pi i\nu\sigma_3},&\mbox{ for $x\in (b_{n,t}',x_{n,t}^*)$,}\\
I,&\mbox{ elsewhere.}
\end{array}
\end{cases}
\end{equation}
In other words, the jump matrices on the lips on the lens and on $\mathbb R\setminus [a-\epsilon, x^*+\epsilon]$ are exponentially close to the identity matrix. As $t>1$, on $(b+\epsilon, x^*-\epsilon)$, in the gap in between the two intervals of the spectrum, the off-diagonal entry is also exponentially small.

\medskip

If we ignore, for a moment, small neighborhoods $U_a$, $U_b$, and
$U_{x^*}$ surrounding $a$, $b$, and $x^*$, the RH problem for $S$
is reduced, up to exponentially small jumps, to a RH problem which we call the
RH problem for the outside parametrix, referring to the region away from the local neighborhoods of the special points. Besides the outside
parametrix, we will need local parametrices near $a$, $b$, and $x^*$
points in order to obtain uniform asymptotics for $S$.
The outside parametrix will determine the asymptotics for $S$ away from the special points, while the local parametrices will contribute to the local behavior of $S$. Of particular importance for us is the construction of the local parametrix near $x^*$ in Section \ref{section: local}, which will in the end describe the local behavior of the eigenvalues in the new cut.

\subsubsection{Outside parametrix}
Ignoring the exponentially small jumps and small neighborhoods of
the special points, our RH problem reduces to the following.
\subsubsection*{RH problem for $P^{(\infty)}$}
\begin{itemize}
\item[(a)] $P^{(\infty)}:\mathbb{C}\setminus [a_{t}',x_{t}^*]
\to\mathbb{C}^{2\times 2}$ is analytic,
    \item[(b)] $P^{(\infty)}$ satisfies the following jump
    conditions,
    \begin{align}&P_+^{(\infty)}(x)=P_-^{(\infty)}(x)
    \begin{pmatrix}0&|x-x_{n,t}^*|^{2\nu}\\-|x-x_{n,t}^*|^{-2\nu}&0
    \end{pmatrix},&\mbox{ as $x\in[a_{n,t}',b_{n,t}']$},\\
    &P_+^{(\infty)}(x)=P_-^{(\infty)}(x)
    e^{-2\pi i\nu\sigma_3},&\mbox{ as $x\in(b_{n,t}',x_{n,t}^*)$,}
        \end{align}
    \item[(c)] $P^{(\infty)}(z)=I+\bigO(1/z)$,\qquad as $z\to\infty$.
\end{itemize}

It is clear from the jump conditions that $P^{(\infty)}=P_{n,t}^{(\infty)}$ will have some singular behavior near $a_{n,t}'$, $b_{n,t}'$, and $x_{n,t}^*$.
Since we did not specify the required behavior near those points, the solution of this RH problem is not unique. However there is only one solution which is compatible with the local parametrices that we will construct afterwards. We now construct this solution explicitly.

\medskip
If $t\leq 1$, we have that $\nu=0$ so that $P^{(\infty)}$ is
analytic in $\mathbb C\setminus [a_{n,t}',b_{n,t}']$.
One checks directly as e.g. in \ \cite{Deift}
that
\begin{equation}
\widehat P^{(\infty)}(z):=
\begin{pmatrix}
  \frac{\beta(z)+\beta(z)^{-1}}{2} & \frac{\beta(z)-\beta(z)^{-1}}{2i} \\
  -\frac{\beta(z)-\beta(z)^{-1}}{2i} & \frac{\beta(z)+\beta(z)^{-1}}{2} \\
\end{pmatrix},
    \qquad z \in \mathbb C \setminus [a_{n,t}',b_{n,t}'],
\end{equation}
with
\begin{equation} \label{def-beta}
\beta(z)= \beta_{n,t}(z) =
\left(\frac{z-b_{n,t}'}{z-a_{n,t}'}\right)^{1/4},
    \qquad z \in \mathbb C \setminus [a_{n,t}',b_{n,t}'],
\end{equation}
is a solution of the RH problem for the outside parametrix if $t\leq 1$.

\medskip

For $t>1$, the situation is slightly more complicated because of
the additional jump on $(b_{n,t}',x_{n,t}^*)$.
This jump can be created by introducing an auxiliary scalar function
$D$, which is analytic in $\mathbb C\setminus [a_{n,t}',x_{n,t}^*]$ and has the following jumps,
\begin{equation}\label{conditions D}
\begin{array}{ll}
D_+(x)D_-(x)=|x-x_{n,t}^*|^{2\nu},&\qquad\mbox{ for
$x\in[a_{n,t}',b_{n,t}']$,}\\[1ex]
D_+(x)D_-(x)^{-1}=e^{2\pi i\nu},&\qquad\mbox{ for
$x\in[b_{n,t}',x_{n,t}^*]$.}
\end{array}
\end{equation}
In addition the limit
\begin{equation}
D_\infty=\lim_{z\to\infty}D(z)\in\mathbb R,
\end{equation}
should exist, so that we can define $P^{(\infty)}$ as follows,
\begin{equation}\label{definition Pinfty}
P^{(\infty)}(z)= D_\infty^{\sigma_3}\widehat
P^{(\infty)}(z)D(z)^{-\sigma_3}.
\end{equation}
Using (\ref{conditions D}) and the fact that
\[
\widehat P_+^{(\infty)}(x)=\widehat P_-^{(\infty)}(x)\begin{pmatrix}0&1\\-1&0\end{pmatrix},\qquad\mbox{ for $x\in(a_{n,t}',b_{n,t}')$,}\]
one verifies that indeed $P^{(\infty)}$ is a
solution of the RH problem for the outside parametrix.

\medskip

We will now construct the function $D$.
Let us first consider the function
\begin{equation}
\Phi(z)=z+\sqrt{(z-1)(z+1)} \quad \mbox{ for $z\in\mathbb
C\setminus[-1,1]$},
\end{equation}
which is the conformal mapping from $\mathbb
C\setminus[-1,1]$ to the exterior of the unit disk.
This function has the convenient property that $\Phi_+(x)\Phi_-(x)=1$ for $x\in\mathbb[-1,1]$.
If we let $F$ map $[a_{n,t}',b_{n,t}']$ to $[-1,1]$,
\begin{equation}
F(z)=\frac{z-b_{n,t}'}{b_{n,t}'-a_{n,t}'}+\frac{z-a_{n,t}'}{b_{n,t}'-a_{n,t}'},
\end{equation}
we have that $\Phi_+(F(x))\Phi_-(F(x))=1$ on $[a_{n,t}',b_{n,t}']$.

\medskip

We also need the function
\[G(z)=\exp
\left(\frac{\sqrt{(z-a_{n,t}')(z-b_{n,t}')}}{\pi}
\int_{a_{n,t}'}^{b_{n,t}'}
\frac{\log\left(x_{n,t}^*-x\right)}{\sqrt{(x-a_{n,t}')(b_{n,t}'-x)}}\
\frac{dx}{z-x}\right),\]
which is analytic in $\mathbb C\setminus [a_{n,t}',b_{n,t}']$, and using a residue argument we find that
\[G_+(x)G_-(x)=x_{n,t}^*-x, \qquad\mbox{ for $x\in [a_{n,t}',b_{n,t}']$}.\]

\medskip

Now let us write $\nu=k+\Delta$, with $k\in\mathbb N\cup \{0\}$ and $|\Delta|\leq 1/2$, so that $|\Delta|$ is the distance from $\nu$ to its nearest nonnegative integer.
Define $D$ for $z\in\mathbb C\setminus[a_{n,t}',b_{n,t}']$
in the following way,
\begin{equation}\label{def D}D(z)=(z-x_{n,t}^*)^{\Delta}\Phi(F(z))^{-\Delta} G(z)^k.
\end{equation}
 Since the branch cuts of the first two factors cancel out against each other on $(-\infty,a_{n,t}')$, $D$ is analytic in $\mathbb C\setminus [a_{n,t}',x_{n,t}^*]$, and one also checks that $D$ has a limit $D_\infty$ as $z\to\infty$. Using the jump properties of $\Phi(F)$, $G$, and $(z-x_{n,t}^*)^{\Delta}$ on $(a_{n,t}',b_{n,t}')$ and on $(b_{n,t}',x_{n,t}^*)$, one verifies that (\ref{conditions D}) is satisfied, so that $P^{(\infty)}$ solves the RH problem for the outside parametrix.

\medskip

As mentioned before, $P^{(\infty)}$ has some singular behavior near $a_{n,t}'$, $b_{n,t}'$, and $x_{n,t}^*$. We have that
\begin{align*}
& P^{(\infty)}(z)=\bigO(|z-a_{n,t}'|^{-1/4}), &\mbox{ as $z\to a_{n,t}'$,}\\
& P^{(\infty)}(z)=\bigO(|z-b_{n,t}'|^{-1/4}), &\mbox{ as $z\to b_{n,t}'$,}\\
& P^{(\infty)}(z)=\bigO(|z-x_{n,t}^*|^{-|\Delta|}), &\mbox{ as $z\to x_{n,t}^*$.}
\end{align*}
This is the 'least singular' behavior that $P^{(\infty)}$ can possibly have if it satisfies the prescribed jump conditions.

\subsubsection{Local parametrices near $a$ and $b$}\label{section Airy}
The construction of the local parametrices near $a$ and $b$ uses
a model RH problem with a solution built out of the Airy function and its derivative. This construction can be done in exactly the same way
as in \cite{Deift} for $\nu=0$ and as in \cite{KV} for
$\nu>0$. The precise construction of the parametrix is not important
for us. At this point, it is enough to know that local
parametrices in sufficiently small but fixed neighborhoods $U_a$ and $U_b$
of $a$ and $b$ (note that $a_{n,t}'$ and $b_{n,t}'$ are included in
those neighborhoods for $t$ close to $1$ and $n$ large) exist in such a way that
\begin{itemize}
\item[(a)]  $P: \overline U_a  \cup \overline U_b \setminus
\Sigma_S\to \mathbb C^{2\times 2}$ is analytic, \item[(b)] for
$z\in \Sigma_S\cap (U_a\cup U_b)$, we have $P_+(z)=P_-(z)v_S(z)$,
\item[(c)] for $z\in\partial U_a\cup\partial U_b$, if we let $n\to\infty$
and $t\to 1$, we have
\begin{equation}\label{asymptotics airy}
P(z)P^{(\infty)}(z)^{-1}=I+\bigO(n^{-1}).
\end{equation}
\end{itemize}
This matching of $P$ with $P^{(\infty)}$ can only be obtained because the behavior of $P^{(\infty)}$ near the endpoints $a_{n,t}'$ and $b_{n,t}'$ is 'not too bad'. If we would have chosen a different outside parametrix with different behavior near the endpoints, this would not be the case.

\subsubsection{Local parametrix near $x^*$}\label{section local 1}
The crucial part of the RH analysis consists of constructing a
local parametrix in a sufficiently small but fixed neighborhood $U_{x^*}$ of
the singular exterior point $x^*$. Our aim is to find a function
$P$ satisfying the following conditions.
\subsubsection*{RH problem for $P$}
\begin{itemize}
\item[(a)]$P:\overline U_{x^*}\setminus\Sigma_S\to\mathbb
C^{2\times 2}$ is analytic, \item[(b)] $P_+(z)=P_-(z)v_S(z)$ for
$z\in U_{x^*}\cap \Sigma_S$, \item[(c)] If we take the double
scaling limit where we let $n\to\infty$ and at the same time we
let $t\to 1$ in such a way that $|t-1|<M\frac{\log n}{n}$, and if we then put
\begin{equation}s=
2(t-1)\frac{n}{\log n}\ {\small\int_b^{x^*}\hspace{-0.4cm}\frac{1}{\sqrt{(s-a)(s-b)}}\,ds},\end{equation}
we have that
\begin{equation}\label{matchingP1}P(z)=P^{(\infty)}(z)\times\
\begin{cases}\begin{array}{ll}
I+\bigO(n^{-\frac{1}{2}+|\Delta|}) & \mbox{ for
$z\in\partial U_{x^*}$, if $s>0$,}\\[1.5ex]
I+\bigO(n^{-\frac{1}{2}}) & \mbox{ for $z\in\partial U_{x^*}$,
if $s\leq 0$},
\end{array}
\end{cases}
\end{equation}
\end{itemize}
where once again we have written $\nu=k+\Delta$, with $k\in\mathbb N\cup \{0\}$ and $|\Delta|\leq 1/2$.

\medskip

We postpone the construction of $P$ to Section \ref{section:
local}. We assume for now the existence of the parametrix $P$
satisfying the above conditions, and proceed with the remaining
part of the RH analysis. It is important to note that the
construction of the local parametrix near $x^*$ will only work for
our particular choices of $m_{n,t}$ and $x_{n,t}^*$ in the construction of the equilibrium measure. It is remarkable that this is the only part of the
RH analysis that would fail for arbitrary bounded $\nu$ or for an arbitrary sequence $x_{n,t}^*$ tending to $x^*$.
We note already that, in (\ref{matchingP1}), the 'matching' of $P$ with $P^{(\infty)}$ breaks down if $\Delta=\pm 1/2$, or when $\nu$ is a half integer.

\subsection{Final transformation of the RH problem}\label{section final}

We now define the function $R$ as follows,
\begin{equation}\label{def R}
R(z)= \begin{cases} \begin{array}{ll}
    S(z)P^{-1}(z), & \textrm{ for $z\in U_a\cup U_b\cup U_{x^*}$}, \\
    S(z)P^{(\infty)}(z)^{-1}, & \textrm{ for $z$ outside the disks}.
    \end{array}    \end{cases}
\end{equation}

\begin{figure}[t]
\begin{center}
    \setlength{\unitlength}{1mm}
    \begin{picture}(137.5,26)(-2.5,11.5)

        \put(25,25){\thicklines\circle*{.8}} \put(25,25){\circle{15}}
            \put(26.3,31.9){\thicklines\vector(1,0){.0001}}
        \put(65,25){\thicklines\circle*{.8}} \put(65,25){\circle{15}}
            \put(66.3,31.9){\thicklines\vector(1,0){.0001}}

        \put(90,25){\thicklines\circle*{.8}}
        \put(104,25){\thicklines\vector(1,0){.0001}}
        \put(97,25){\line(1,0){11}}
        \put(72,25){\line(1,0){11}} \put(79,25){\thicklines\vector(1,0){.0001}}
        \put(7.5,25){\line(1,0){10.5}} \put(14,25){\thicklines\vector(1,0){.0001}}

        \qbezier(29,30.78)(45,41)(61,30.78) \put(46,35.85){\thicklines\vector(1,0){.0001}}
        \qbezier(29,19.22)(45,9)(61,19.22) \put(46,14.15){\thicklines\vector(1,0){.0001}}

        \put(23.3,27){\small $a_{n,t}'$}
        \put(63,27){\small $b_{n,t}'$}
        \put(90,25){\circle{15}}
        \put(90,27){\small $x_{n,t}^*$}
        \put(91.3,31.9){\thicklines\vector(1,0){.0001}}
    \end{picture}
    \caption{The contour $\Sigma_R$ after the third and final
        transformation.}
    \label{figure: contour R}
\end{center}
\end{figure}
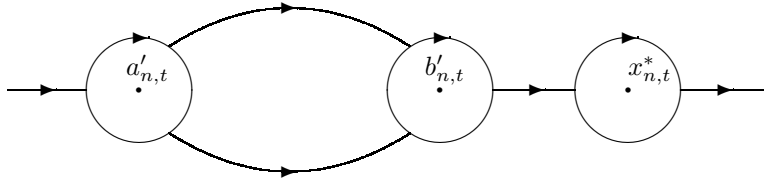

Here $P$ is the parametrix satisfying the RH problem posed in
Section \ref{section local 1} inside $U_{x^*}$, and $P$ is equal
to the Airy parametrices described in Section \ref{section Airy}
inside $U_a$ and $U_b$. Inside the disks, the parametrices are
constructed in such a way that they have exactly the same jumps as
$S$ has, and this implies that $R$ is analytic inside $U_{x^*}$,
$U_a$, and $U_b$. $R$ is also analytic on
$[a_{n,t}',b_{n,t}']\setminus (\bar U_{x^*}\cup \bar U_a\cup \bar
U_b)$ since $S$ and $P^{(\infty)}$ have the same jump here. We can
conclude that $R$ is analytic outside a contour $\Sigma_R$ as
shown in Figure \ref{figure: contour R}. Outside the disks, the jump
matrices $v_R$ converge exponentially fast to the jump
$v^{(\infty)}$ for $P^{(\infty)}$. At the boundaries of the disks, the jump matrices converge as well because of the matching of the local parametrices with the outside parametrix. One verifies that, with the contour orientated as indicated in Figure \ref{figure: contour R}, $R$ satisfies the following RH
problem.
\subsubsection*{RH problem for $R$}
\begin{itemize}
\item[(a)] $R:\mathbb C\setminus \Sigma_R\to \mathbb C^{2\times
2}$ is analytic, \item[(b)] $R_+(z)=R_-(z)v_R(z)$ for
$z\in\Sigma_R$, with $v_R$ given by
\begin{equation}\label{vR}
v_R(z)=\begin{cases}
\begin{array}{ll}
I+\bigO(e^{-cn}), &\mbox{ for $z\in\Sigma_R\cap \Sigma_S$,}\\[1.5ex]
P(z)P^{(\infty)}(z)^{-1},&\mbox{ for $z\in\partial
U_{x^*}\cup\partial U_a\cup\partial U_b$.}
\end{array}
\end{cases}
\end{equation}
\item[(c)]$R(z)=I+\bigO(z^{-1})$ as $z\to\infty$.
\end{itemize}
In the double scaling limit where we let $n\to\infty$ and
at the same time we let $t\to 0$ in such a way that
$|t-1|<M\frac{\log n}{n}$, we recall from (\ref{asymptotics airy}) and (\ref{matchingP1}) that
\begin{equation}
P(z)P^{(\infty)}(z)^{-1}=\begin{cases}
\begin{array}{ll}
I+\bigO(n^{-1}),&\mbox{ for $z\in
\partial U_a\cup\partial U_b$,}\\[1.5ex]
I+\bigO(n^{-1/2+|\Delta|}),&\mbox{ for $z\in\partial
U_{x^*}$, if $s>0$,}\\[1.5ex]
I+\bigO(n^{-1/2}),&\mbox{ for $z\in\partial U_{x^*}$, if $s\leq
0$.}
\end{array}
\end{cases}
\end{equation}
Using (\ref{vR}), we see that the jump matrix $v_R$ has the following asymptotics
in the double scaling limit.
\begin{equation}\label{vRasymptotics}
v_R(z)=\begin{cases}
\begin{array}{ll}
I+\bigO(e^{-Cn}), &\mbox{ for $z\in\Sigma_R\cap \Sigma_S$,}\\[1.5ex]
I+\bigO(n^{-1}),&\mbox{ for $z\in
\partial U_a\cup\partial U_b$,}\\[1.5ex]
I+\bigO(n^{-1/2+|\Delta|}),&\mbox{ for $z\in\partial
U_{x^*}$, if $s>0$,}\\[1.5ex]
I+\bigO(n^{-1/2})&\mbox{ for $z\in\partial U_{x^*}$, if $s\leq
0$.}
\end{array}
\end{cases}
\end{equation}
Here $C$ is an unimportant positive constant.

\medskip

From the uniform convergence of the jump matrix to the identity matrix (except when $\Delta=\pm 1/2$), it
now follows as e.g.\ in \cite{Deift, DKMVZ1} that in the double
scaling limit, also the RH solution $R$ is uniformly close to the identity
matrix,
\begin{equation}\label{asymptotics R}
R(z)=\begin{cases}\begin{array}{ll}
I+\bigO(n^{-1/2+|\Delta|}), &\mbox{ if $s>0$,}\\[1.5ex]
I+\bigO(n^{-1/2})&\mbox{ if $s\leq 0$.}
\end{array}\end{cases}
\end{equation}
In fact, one can even obtain a full asymptotic expansion in negative powers of $n$, but for our purposes, (\ref{asymptotics R}) is already sufficient.
These asymptotics hold moreover uniformly for $z\in\mathbb C\setminus \Sigma_R$, and uniformly in $s$ as well, although for $s$ near a half positive integer, (\ref{asymptotics R}) only says that $R$ is uniformly bounded, without convergence to the identity matrix.

\section{Construction of the local parametrix near
$x^*$}\label{section: local}

In this section, we construct the parametrix $P$ solving the RH
problem posed in Section \ref{section local 1} explicitly.
The
construction is different in the case where $t\leq 1$ compared to the
case where $t>1$. For $t\leq 1$, the support of $\mu_{n,t}$ consists
of one interval, and the parametrix can immediately be written
down in terms of elementary functions. For $t>1$, we will use a model RH problem built out of
Hermite polynomials. This is the crucial new feature in the RH
analysis that makes it possible to find double scaling asymptotics
near the singular exterior point.

\subsection{Construction of the parametrix if $t\leq 1$}

Recall that we are in the one-interval case with $\nu=0$ for $t\leq 1$, and that the equilibrium measure does depend on $t$ but not on $n$ in this case.
For convenience, we drop the subscript $n$ to write the functions, related to the equilibrium measure, for which it is clear that they do not depend on $n$ if $t\leq 1$.

\medskip

We can construct the parametrix $P$ directly as follows,
\begin{equation}\label{definition P 0}
P(z)=P^{(\infty)}(z)\begin{pmatrix}
1&\frac{e^{2n\phi_t(x_{t}^*)}}{2\pi i}\int_{\mathbb
R}\frac{e^{-u^2}}{u-\sqrt nf_{t}(z)}du\\0&1
\end{pmatrix}.
\end{equation}
Here we should still define $f_t$ as a conformal mapping in $U_{x^*}$, positive for $x>x_{n,t}^*$, in such a way that $P$ has the appropriate jump condition.
Using Cauchy's theorem and the fact that $P^{(\infty)}$ is analytic in $U_{x^*}$, one verifies that
\[P_+(x)=P_-(x)\begin{pmatrix}1&e^{2n\phi_{t}(x_t^*)}e^{-nf_t(x)^2}\\0&1
\end{pmatrix}, \qquad\mbox{ for $x\in U_{x^*}\cap \mathbb R$.}\]
As explained in Section \ref{section local 1}, we want to construct $P$ in such a way that
\[P_+(x)=P_-(x)v_S(x)=P_-(x)\begin{pmatrix}1&e^{2n\phi_{t}(x)}\\0&1
\end{pmatrix}, \qquad\mbox{ for $x\in U_{x^*}\cap \mathbb R$,}\]
which is now equivalent to the following condition on $f_t$,
\begin{equation}\label{condition f phi}
f_{t}(z)^2=2\phi_{t}(x_{t}^*)-2\phi_{t}(z).
\end{equation}
Since $x_t^*$ is defined as the simple zero of $h_{t}$ near
$x^*$, we have by (\ref{def phi}) that $\phi_{t}(z)-\phi_{t}(x_{t}^*)$ has a double zero at
$x_{t}^*$, and consequently $f_{t}$, defined by condition (\ref{condition f phi}), is a
conformal mapping near $x_t^*$. Furthermore we have that
\begin{equation}\label{f-x} f_{t}(x_{t}^*)=0, \qquad
f_{t}'(x_{t}^*)=\frac{1}{2^{1/4}}Q_{t}''(x_{t}^*)^{1/4}>0.
\end{equation}

\medskip

It remains tho show that the matching condition (c) in the RH
problem for $P$ is valid as well. For this purpose, we note that $\phi_t(x_t^*)\leq 0$ if $t\leq 1$.
Indeed this is the case because we know by (\ref{equation g phi}) that
\[\phi_t(x_t^*)=2g_{t}(x_t^*)-V_t(x_t^*)-\ell_t.\]
The right hand side is negative for $t<1$, since the opposite would violate the variational inequality (\ref{variationalcondition:mu-inequality}) for the equilibrium measure $\mu_t=\rho_t$.
By (\ref{definition P 0}) we now
find easily that
\begin{equation}P(z)=P^{(\infty)}(z)(I+\bigO(n^{-1/2})),
\qquad\mbox{ as $n\to\infty$, $t\leq 1$}.\label{matchingPl1}
\end{equation}
This shows that $P$ satisfies the required RH conditions, stated in Section \ref{section local 1}.

\subsection{Construction of the parametrix if $t > 1$}

Because we do not have a variational inequality for $\mu_{n,t}$ near $x^*$ any longer if $t>1$, the construction of the parametrix, as done in the previous section, fails for $t>1$. In this case we have that $\nu=s>0$.

\subsubsection{Model RH problem for $\Psi$}

In order to construct
the parametrix $P$, we need the function $\Psi=\Psi(\zeta;k)$
defined for $k\in\mathbb N\cup\{0\}$ by
\begin{equation}\label{def Psi}
\Psi(\zeta;k)={\small \begin{pmatrix}
        \frac{\pi^{1/4}\sqrt{k!}}{2^{k/2}}H_k(\zeta) &
            \displaystyle{\frac{\pi^{1/4}\sqrt{k!}}{2\cdot
             2^{k/2}\pi i}
            \int_{\mathbb R}\frac{H_k(u)
            e^{-u^2}}{u-\zeta}\,
            du}
        \\[3ex]
         -2\pi i \frac{2^{(k-1)/2}}{\pi^{1/4}\sqrt{(k-1)!}}H_{k-1}(\zeta) &
            \displaystyle{-\frac{2^{(k-1)/2}}{\pi^{1/4}\sqrt{(k-1)!}}
            \int_{\mathbb R}\frac{H_{k-1}(k u) e^{-u^2}}{u-\zeta}\,du}
    \end{pmatrix}e^{-\frac{\zeta^2}{2}\sigma_3},
    \qquad\mbox{for $\zeta\in\mathbb C \setminus\mathbb
    R$,}}
\end{equation}
where $H_k$ denotes the degree $k$ Hermite polynomial, orthonormal
with respect to the weight $e^{-x^2}$ on $\mathbb R$. It is a standard fact \cite{SaTo} that the leading coefficient of the normalized polynomial $H_k$ is equal to $\frac{2^{k/2}}{\pi^{1/4}\sqrt{k!}}$, and we agree $H_{-1}:=0$.
$\Psi=\Psi(\zeta;k)$ solves the following RH problem, which is a
slightly modified version as the RH problem for Y satisfying the
conditions (\ref{RHP Y: b})-(\ref{RHP Y: c}), but now corresponding
to the external field $x^2$ instead of $nV(x)$.

\subsubsection*{RH problem for $\Psi$}
\begin{itemize}
\item[(a)]$\Psi:\mathbb C\setminus\mathbb R\to\mathbb C^{2\times
2}$ is analytic \item[(b)] For $x\in\mathbb R$,
\begin{equation}\label{jump
Psi}\Psi_+(x)=\Psi_-(x)\begin{pmatrix}1&1\\0&1
\end{pmatrix},\end{equation}
\item[(c)]$\Psi$ behaves as follows as $\zeta\to\infty$,
\begin{equation}\label{RHP M:c}\Psi(\zeta)=\left(I
+\frac{1}{\zeta}\begin{pmatrix}0&\frac{ik!}{2^{k+1}\sqrt{\pi}}\\
-\frac{i2^k\sqrt{\pi}
}{(k-1)!}&0\end{pmatrix}+\bigO\left(\frac{1}{\zeta}\right)^2\right)\zeta^{k\sigma_3}
e^{-\frac{\zeta^2}{2}\sigma_3}, \qquad\mbox{ as
$\zeta\to\infty$}.\end{equation}
\end{itemize}
Note that the matrix on the right hand side of $(\ref{definition P 0})$ looks similar to $\Psi$ for the value of $k=0$.

\subsubsection{Construction of the parametrix}

We will define the parametrix $P$ of the following form,
\begin{equation}\label{definition: P+}
P(z)=E_{n,t}(z)\Psi(\sqrt{n}f_{n,t}(z);k)
e^{-n\phi_{n,t}(z)\sigma_3}(z-x_{n,t}^*)^{-\nu\sigma_3}.
\end{equation}
Here the parameter $k$ in the model RH problem is, as before, equal to the
non-negative integer that lies closest to $\nu$. The analytic
functions $E_{n,t}$ and $f_{n,t}$ are still to be determined at
this point.

\subsubsection*{Jump condition for $P$}

Since we assume $E_{n,t}$ analytic in $U_{x^*}$, it does not
effect the jumps of $P$.
If $f_{n,t}$ is a real conformal mapping in $U_{x^*}$ with
$f_{n,t}(x_{n,t}^*)=0$ and $f_{n,t}'(x_{n,t}^*)>0$, it follows
from the jump relation (\ref{jump Psi}) for $\Psi$ that $P$ satisfies the
jump conditions
\begin{align*}
&P_+(x)=P_-(x)\begin{pmatrix}1&
|x-x_{t}^*|^{2\nu}e^{2n\phi_{n,t}(z)}\\0&1\end{pmatrix},&\mbox{for
$x_{n,t}^*<x<x_{n,t}^*+\delta$,}\\
&P_+(x)=P_-(x)\begin{pmatrix}e^{-2\pi i\nu}&
|x-x_{n,t}^*|^{2\nu}e^{2n\phi_{n,t}(z)}\\0&e^{2\pi
i\nu}\end{pmatrix},&\mbox{for
$x_{n,t}^*-\delta<x<x_{n,t}^*$.}\\
\end{align*}
This shows already that our parametrix $P$ satisfies the jump condition we required in Section \ref{section local 1},
for any choice of $E_{n,t}$ and $f_{n,t}$. The freedom we retain to define $E_{n,t}$ and $f_{n,t}$,
will be necessary to create a suitable matching of $P$
with $P^{(\infty)}$ at $\partial U_{x^*}$.

\subsubsection*{Matching condition for $P$}
In the double scaling limit where we let $n\to\infty$ and $t\to 1$
in such a way that $|t-1|\leq M\frac{\log n}{n}$, we would like
the following matching condition to hold,
\begin{equation}
P(z)P^{(\infty)}(z)^{-1}\to I,\qquad \mbox{ for $z\in\partial
U_{x^*}$.}
\end{equation}
The value of our second scaling parameter $s=\nu$ will determine how good the matching between $P$ and $P^{(\infty)}$ can be. If $\nu$ is an integer, we will have a good matching up to order $\bigO(n^{-1/2})$. When $\nu$ moves further away from an integer, the matching will become worse, but still reasonably good. Only when $\nu$ is close to a half integer, the matching is not good anymore, and then we will only have matching up to order $\bigO(1)$.

\medskip

Let us first define $f_{n,t}$ similarly as in the case where $t\leq 1$ by
\begin{equation}\label{condition phi f}
f_{n,t}(z)^2=2\phi_{n,t}(x_{n,t}^*)-2\phi_{n,t}(z).
\end{equation}
Again we know by the definition of $x_{n,t}^*$ that $Q_{n,t}$ has a
double zero at $x_{n,t}^*$, or equivalently, that $h_{n,t}$ has a simple zero at $x_{n,t}^*$.
This implies using the definition
(\ref{def phi}) of $\phi_{n,t}$ that the right hand side in
(\ref{condition phi f}) has a double zero at $x_{n,t}^*$.
Consequently this again defines $f_{n,t}$ in a
conformal way near $x^*$, with
\begin{equation}\label{fprime}
f_{n,t}(x_{n,t}^*)=0, \qquad f_{n,t}'(x_{n,t}^*)=
\frac{1}{2^{1/4}}\, Q_{n,t}''(x_{n,t}^*)^{1/4}.
\end{equation}
As $n\to\infty$ and $t\to 1$, it is clear from the discussion in Section \ref{section equilibrium} that
\begin{equation}f_{n,t}'(x_{n,t}^*)=\frac{1}{2^{1/4}}q_V''(x^*)^{1/4}+\bigO(t-1).\label{estimate f}\end{equation}
Note also that the definition of $f_{n,t}$ ensures the existence
of a constant $C>0$ such that
\begin{equation}\label{f matching}
\sqrt n\, |f_{n,t}(z)|>C\sqrt n, \qquad \mbox{ for $z\in\partial U_{x^*}$,}
\end{equation}
under the condition that we have chosen $U_{x^*}$ sufficiently small but fixed.
This means that, as $n\to\infty$, we can use the asymptotic
condition (\ref{RHP M:c}) when evaluating $\Psi$ at $\sqrt n f_{n,t}(z)$. By (\ref{definition: P+}) we
have that
\begin{equation}P(z)=E_{n,t}(z)(I+\bigO(n^{-1/2}))
(\sqrt n f_{n,t}(z))^{k\sigma_3}e^{-n\phi_{n,t}(x_{n,t}^*)\sigma_3}
(z-x_{n,t}^*)^{-\nu\sigma_3},\qquad\mbox{ for $z\in\partial
U_{x^*}$.}\label{matchingP2}
\end{equation}
This behavior suggests how we should choose the analytic pre-factor $E_{n,t}$.
If we take $E_{n,t}$ as follows,
\begin{equation}\label{def E}
E_{n,t}(z)=P^{(\infty)}(z)(z-x_{n,t}^*)^{\nu\sigma_3}
e^{n\phi_{n,t}(x_{n,t}^*)\sigma_3}(\sqrt n f_{n,t}(z))^{-k\sigma_3},
\end{equation}
one checks directly using the definition (\ref{definition Pinfty}) of $P^{(\infty)}$ that $E_{n,t}$ is
analytic in $U_{x^*}\setminus\{x_{n,t}^*\}$ with a removable singularity at $x_{n,t}^*$.

\medskip

Inserting this definition of $E_{n,t}$ into (\ref{matchingP2}) gives us the
following behavior of $P$ for $z\in\partial
U_{x^*}$ in the double scaling limit,
\begin{multline}\label{matchingP3}
P(z)=P^{(\infty)}(z)(z-x_{n,t}^*))^{\nu\sigma_3}
e^{n\phi_{n,t}(x_{n,t}^*)\sigma_3}(\sqrt nf_{n,t}(z))^{-k\sigma_3}\\
\times \left(I
+\frac{1}{\sqrt n f_{n,t}(z)}\begin{pmatrix}0&\frac{ik!}{2^{k+1}\sqrt{\pi}}\\
-\frac{i2^k\sqrt{\pi}
}{(k-1)!}&0\end{pmatrix}+\bigO\left(\frac{1}{n}\right)
\right)\\
\times
(\sqrt nf_{n,t}(z))^{k\sigma_3}e^{-n\phi_{n,t}(x_{n,t}^*)\sigma_3}(z-x_{n,t}^*)^{-\nu\sigma_3}.
\end{multline}
We can express (\ref{matchingP3}) in the following more convenient form,
\begin{equation}\label{matchingP4}
P(z)P^{\infty}(z)^{-1}=E_{n,t}(z)\left(I +\bigO(n^{-1/2}) \right)E_{n,t}(z)^{-1}
\end{equation}
If $E_{n,t}$ would be bounded uniformly in $n$ and $t$ for $z\in U_{x^*}$, this would provide a good matching. However, for general values of $\nu$ this is not the case. We can write
\begin{equation}\label{definition Ehat}
E_{n,t}(z)=\widehat E_{n,t}(z) e^{n\phi_{n,t}(x_{n,t}^*)\sigma_3}\sqrt n^{\,-k\sigma_3},
\end{equation}
where $\widehat E_{n,t}$ is bounded uniformly in $n$ and $t$, since it depends on $n$ and $t$ only through the points $a_{n,t}$, $b_{n,t}$, and $x_{n,t}^*$, which vary smoothly with $n$ and $t$. The quality of the matching now depends on the asymptotic behavior of $e^{n\phi_{n,t}(x_{n,t}^*)\sigma_3}\sqrt n^{\,-k\sigma_3}$.
We deal with this behavior in the following proposition.
\begin{proposition}\label{proposition phix}
In the double scaling limit where $n\to\infty$ and $t\to\infty$ in such a way that
$|t-1|<M\frac{\log n}{n}$, with
\begin{equation}\label{s2}\nu=
2(t-1)\frac{n}{\log n}\ \int_b^{x^*}\hspace{-0.3cm}\frac{1}{\sqrt{(s-a)(s-b)}}\,ds\in\mathbb R^+,\end{equation}
we have that
\[e^{n\phi_{n,t}(x_{n,t}^*)\sigma_3}=\bigO(\sqrt n^{\,\nu\sigma_3}).\]
\end{proposition}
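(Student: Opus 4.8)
Since the matrix in the statement is diagonal, the assertion is equivalent to the two--sided estimate $n\phi_{n,t}(x_{n,t}^*)=\tfrac{\nu}{2}\log n+\bigO(1)$, and by the definition (\ref{s2}) of $\nu$ this is the same as
\[
\phi_{n,t}(x_{n,t}^*)=\kappa\,(t-1)+\bigO\!\left(\frac1n\right),\qquad
\kappa:=\int_b^{x^*}\frac{ds}{\sqrt{(s-a)(s-b)}}.
\]
So everything reduces to this first--order expansion of $\phi_{n,t}(x_{n,t}^*)$ in the double scaling limit. I would obtain it by: (i) rewriting $\phi_{n,t}(x_{n,t}^*)$ as one half of the Euler--Lagrange function of $\mu_{n,t}$ evaluated at the critical point; (ii) Taylor--expanding this quantity around the reference point $t=1$, $m_{n,t}=0$, using crucially that the Euler--Lagrange function has a double zero at $x^*$; and (iii) computing the two first derivatives by differentiating the equilibrium relations.

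For (i), exactly as in the derivation of (\ref{varQ}) but applied to $\mu_{n,t}$ and the field $V_t$ — using the DKM identity (\ref{DKM identity q}) and the one--cut structure of $\mu_{n,t}$ from Section~\ref{section equilibrium} — one has, for $x>b_{n,t}'$,
\[
2\int\log|x-y|\,d\mu_{n,t}(y)-V_t(x)-\ell_{n,t}
=-2\int_{b_{n,t}'}^{x}Q_{n,t}^{1/2}(s)\,ds
=2\phi_{n,t}(x),
\]
the last equality being the definition (\ref{def phi}); since $Q_{n,t}^{1/2}$ is analytic for $x>b_{n,t}'$ (in particular across $x_{n,t}^*$, which is only a double zero of $Q_{n,t}$), this holds up to $x=x_{n,t}^*$. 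Hence $\phi_{n,t}(x_{n,t}^*)=\tfrac12\mathcal F_{n,t}(x_{n,t}^*)$ with $\mathcal F_{n,t}(x):=2\int\log|x-y|\,d\mu_{n,t}(y)-V_t(x)-\ell_{n,t}$. I would then regard $\mathcal F$, its critical point, and $\mu$ as functions of the two parameters $(t,m)$, where $m$ is the mass removed from the measure; by the analysis of Section~\ref{section equilibrium} (and the Buyarov--Rahmanov result, in the regime where no singular point is present after excluding $[x^*-\epsilon,x^*+\epsilon]$) they depend smoothly on $(t,m)$ on a fixed neighbourhood of $(1,0)$. At $(1,0)$ one has $\mu=\rho_V$, $\mathcal F_{1,0}(x)=-2\int_b^xq_V^{1/2}$, and $\mathcal F_{1,0}(x^*)=0$, $\mathcal F_{1,0}'(x^*)=-2q_V^{1/2}(x^*)=0$: the first is precisely the statement that $x^*$ is a type I singular point (equality in (\ref{variationalcondition:mu-inequality}), i.e.\ $\int_b^{x^*}q_V^{1/2}=0$) and the second holds because $q_V$ has a double zero at $x^*$.

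For (ii)--(iii), write $\Xi(t,m):=\mathcal F_{(t,m)}(x^*_{(t,m)})$. The chain--rule terms $\mathcal F_{1,0}'(x^*)\,\partial_\bullet x^*$ drop out, so $\partial_t\Xi(1,0)$ and $\partial_m\Xi(1,0)$ are simply the $t$-- and $m$--derivatives of the Euler--Lagrange function taken at the fixed point $x^*$. Differentiating the equality $\mathcal F_{(t,m)}\equiv0$ on the support (at $(1,0)$) shows that $\dot\mu^{(t)}+\rho_V$ and $\dot\mu^{(m)}$ have constant logarithmic potential on $[a,b]$, where $\dot\mu^{(t)}=\partial_t\mu_{(t,0)}|_{t=1}$, $\dot\mu^{(m)}=\partial_m\mu_{(1,m)}|_{m=0}$; since a compactly supported signed measure with constant potential on $[a,b]$ is a scalar multiple of the arcsine (equilibrium) measure $\omega_{[a,b]}$, and the respective total masses are $1$ and $-1$, this gives $\dot\mu^{(t)}=\omega_{[a,b]}-\rho_V$ and $\dot\mu^{(m)}=-\omega_{[a,b]}$ (the first being the Buyarov--Rahmanov--type first variation for $V\mapsto V/t$; it can also be checked directly on $V(x)=x^2$). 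Using the Cauchy transforms $G_{\omega_{[a,b]}}(z)=\big((z-a)(z-b)\big)^{-1/2}$ and $G_{\rho_V}(z)=\tfrac12V'(z)-q_V^{1/2}(z)$, and differentiating $\mathcal F$ in $x$ for $x>b$ (the relevant derivative vanishing at $b$), one finds
\[
\partial_t\Xi(1,0)=\int_b^{x^*}\!\left(\frac{2}{\sqrt{(s-a)(s-b)}}+2q_V^{1/2}(s)\right)ds=2\kappa,
\qquad
\partial_m\Xi(1,0)=-2\kappa,
\]
the $q_V^{1/2}$ integral vanishing once more by the singular--point condition. Hence $\phi_{n,t}(x_{n,t}^*)=\tfrac12\Xi(t,m_{n,t})=\kappa(t-1)-\kappa\,m_{n,t}+\bigO\big((t-1)^2+m_{n,t}^2\big)$, and in the double scaling regime $|t-1|\le M\log n/n$ we have $m_{n,t}=\bigO(1/n)$ by (\ref{definition mnt}) and $(t-1)^2=\bigO\big((\log n/n)^2\big)=o(1/n)$, so $\phi_{n,t}(x_{n,t}^*)=\kappa(t-1)+\bigO(1/n)$, uniformly for $s$ bounded since the second derivatives of $\Xi$ stay bounded near $(1,0)$.

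The main obstacle is step (iii): pinning down the first variation $\dot\mu^{(t)}=\omega_{[a,b]}-\rho_V$ under the rescaling $V\mapsto V/t$. Once that is in hand, the decisive simplification is the cancellation $\int_b^{x^*}q_V^{1/2}=0$, which is nothing but the defining property of a type I singular point, and which leaves exactly the arcsine integral $\kappa$ appearing in (\ref{s2}); a secondary point of care is the $C^2$ (real--analytic) dependence of the equilibrium data on $(t,m)$ near $(1,0)$, where the absence of other singular points is used.
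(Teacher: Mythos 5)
Your proposal is correct in substance and rests on the same two pillars as the paper's proof: the Buyarov--Rakhmanov first variation of the equilibrium measure (whose derivative in the mass parameter is the arcsine measure of the support) and the type~I cancellation $\int_b^{x^*}q_V^{1/2}\,ds=0$ together with $q_V^{1/2}(x^*)=0$, which is exactly what the paper uses in the form $\phi_{n,1}(x^*)=\phi_{n,1}'(x^*)=0$. The organization differs, though. The paper never Taylor-expands in the parameters: it notes that $t\mu_{n,t}$ is the (constrained) equilibrium measure of mass $t(1-m_{n,t})$ in field $V$, writes the \emph{exact} Buyarov--Rakhmanov integral identity $t\mu_{n,t}-\mu_{n,1}=\int_1^{t(1-m_{n,t})}\omega_{n,s}\,ds$ at the level of densities, integrates it from $b_{n,t}'$ to $x_{n,t}^*$, and controls all errors by the already-established Lipschitz bounds $a_{n,t}'=a+\bigO(t-1)$, $b_{n,t}'=b+\bigO(t-1)$, $x_{n,t}^*=x^*+\bigO(t-1)$, getting $t\phi_{n,t}(x_{n,t}^*)=(t-1)\kappa+\bigO(n^{-1})+\bigO((t-1)^{3/2})$, which suffices after multiplying by $n$. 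You instead differentiate $\Xi(t,m)=\mathcal F_{(t,m)}(x^*_{(t,m)})$ in the two parameters and invoke a first-order expansion with \emph{quadratic} error; your derivative computations ($\partial_t\Xi=2\kappa$, $\partial_m\Xi=-2\kappa$, the chain-rule term killed by the double zero) are correct and reproduce the same leading term, including the $-\kappa m_{n,t}=\bigO(1/n)$ contribution that the paper absorbs into its error term. What your route buys is a cleaner conceptual statement (a genuine two-parameter linearization of the Euler--Lagrange value at the moving critical point); what it costs is the need for $C^2$ (or at least quadratic-remainder) dependence of the equilibrium data $(a',b',h,\ell,x^*_{(t,m)})$ on $(t,m)$ near $(1,0)$ --- note that a mere $o(|t-1|+m)$ remainder would only give $o(\log n)$ after multiplying by $n$, so this regularity is genuinely load-bearing. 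It is true in this regular one-cut constrained setting (implicit function theorem on the endpoint equations, with non-degeneracy because $a,b$ are simple zeros of $q_V$ and the excluded neighbourhood removes the degeneracy at $x^*$), but it is asserted rather than proved in your write-up, and the paper's exact-identity route is designed precisely to avoid having to establish it; a small related point is that the relevant equilibrium problem for $t>1$ is the one constrained to $\mathbb R\setminus[x^*-\epsilon,x^*+\epsilon]$, so the Buyarov--Rakhmanov statement must be applied to that constrained family, as both you and the paper implicitly do.
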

\begin{proof}
We show that \[n\phi_{n,t}(x_{n,t}^*)=\frac{\nu}{2}\log n+\bigO(1),\]
from which the proposition follows directly.
Checking the variational equality for the measure $t\mu_{n,t}$, starting from (\ref{variationalcondition:mu-equality}), learns us that $t\mu_{n,t}$ is the equilibrium measure in external field $V$, with mass $t(1-m_{n,t})$. Now it follows from a formula by Buyarov and Rakhmanov \cite{BR} that
\begin{equation}\label{BuRa}
t\mu_{n,t}-\mu_{n,1}=\int_1^{t(1-m_{n,t})}\omega_{n,s} ds,
\end{equation}
where $\omega_{n,t}$ denotes the equilibrium measure minimizing the unweighted logarithmic energy
\[I(\omega)=\iint\log\frac{1}{|x-y|}d\omega(x)d\omega(y)\] among all probability measures supported on the support $[a_{n,t}',b_{n,t}']$ of the weighted equilibrium measure $\mu_{n,t}$. In other words the equilibrium measure in an external field can be realized as an integral of unweighted equilibrium measures. Further it is known \cite{SaTo} that
\[d\omega_{n,t}(x)=\frac{1}{\pi\sqrt{(x-a_{n,t}')(b_{n,t}'-x)}}dx.\]
On the level of densities, (\ref{BuRa}) means that
\begin{multline}th_{n,t}(z)\sqrt{(z-a_{n,t}')(z-b_{n,t}')}-h_{n,1}(z)\sqrt{(z-a)(z-b)}\\=\int_1^{t(1-m_{n,t})} \frac{1}{\sqrt{(z-a_{n,s}')(z-b_{n,s}')}}ds.\end{multline}
Integrating this identity from $b_{n,t}'$ to $x_{n,t}^*$ gives us,
using the definition (\ref{def phi}) of $\phi_{n,t}$,
that \begin{multline}t\phi_{n,t}(x_{n,t}^*)-\phi_{n,1}(x_{n,t}^*)+\int_b^{b_{n,t}'}h_{n,1}(z)\sqrt{(z-a)(z-b)}dz\\
=\int_{b_{n,t}'}^{x_{n,t}^*}\int_1^{t(1-m_{n,t})} \frac{1}{\sqrt{(z-a_{n,s}')(z-b_{n,s}')}}ds\, dz.\end{multline}
Using the fact that $\phi_{n,1}(x^*)=\phi_{n,1}'(x^*)=0$ and estimating the third term on the left using the behavior of the special points
\[x_{n,t}^*=x^*+\bigO(t-1), \qquad b_{n,t}'=b+\bigO(t-1),\]
we obtain that
\[t\phi_{n,t}(x_{n,t}^*)=\int_{b_{n,t}'}^{x_{n,t}^*}\int_1^{t(1-m_{n,t})}\frac{1}{\sqrt{(x-a_{n,s}')(x-b_{n,s}')}}ds\, dx +\bigO(t-1)^{3/2},\]
Now using also the facts that
$m_{n,t}=\frac{\nu}{n}=\bigO(n^{-1})$ and that $a_{n,t}'=a+\bigO(t-1)$,
we find that
\[\phi_{n,t}(x_{n,t}^*)=(t-1) \int_b^{x^*}\frac{1}{\sqrt{(x-a)(x-b)}}dx+ \bigO(n^{-1}),\]
and consequently by (\ref{s2})
we find that
\[n\phi_{n,t}(x_{n,t}^*)=\frac{\nu}{2}\log n+\bigO(1),\]
which proves the proposition.
\end{proof}
It follows from the proposition that, in view of (\ref{definition Ehat}),
\[E_{n,t}(z)=\bigO(\sqrt n^{|\Delta|}), \qquad\mbox{ with $\Delta=\nu-k$},\]
and this gives us by (\ref{matchingP4}) a matching as follows,
\begin{equation}
\label{matchingP5}
P(z)P^{\infty}(z)^{-1}=I +\bigO(n^{-1/2+|\Delta|}).
\end{equation}
This ends the construction of the local parametrix.

\medskip

Indeed we see that the quality of the matching is good when $\nu$ is close to an integer (or when $\Delta$ is small), but is getting worse when $\nu$ approaches a half integer (or when $\Delta$ approaches $\pm 1/2$).
However, even near the half integers, the parametrix fits sufficiently well to obtain the asymptotics (\ref{asymptotics R}) for $R$, which is necessary to prove Theorem \ref{theorem: universality 2}.

\section{Universality of the eigenvalue correlation
kernel}\label{section proof} In the previous sections we found asymptotic for $R$ in
a suitable double scaling limit where $n\to\infty$ and $t\to 1$. Reversing the transformation $S\mapsto R$ and
using the explicit formula (\ref{KinS}) for the kernel $K_{n,t}$
in terms of $S$, will enable us to find asymptotics for the
eigenvalue correlation kernel and to prove Theorem \ref{theorem:
universality}.

\medskip

For $x,y\in \mathbb R \cap U_{x^*}$, we have by (\ref{def R}) that
$R(z)=S(z)P(z)^{-1}$, with $P$ the local parametrix constructed in
Section \ref{section: local}. Inserting this into (\ref{KinS}) gives
us the following identity,
\begin{equation}\label{KinPR}
    K_{n,t}(x,y) =
        \frac{(x-x_{n,t}^*)_+^{\nu}(y-x_{n,t}^*)_+^{\nu}
        e^{n\phi_{+}(x)}e^{n\phi_{+}(y)}}{2\pi i(x-y)}
        \begin{pmatrix}
            0 & 1
        \end{pmatrix}
        P_+^{-1}(y)R(y)R^{-1}(x)P_+(x)
        \begin{pmatrix}
            1 \\ 0
        \end{pmatrix}.
\end{equation}
We will now use the asymptotics (\ref{asymptotics R}) for $R$ and
the explicit formulas for $P$ in order to find asymptotics for
$K_{n,t}$.

\subsection{Proof of Theorem \ref{theorem: universality}}
We start with the proof of Theorem \ref{theorem:
universality} in the one-interval case where $t\leq 1$ and $\nu=0$.
We prove that the re-scaled eigenvalue correlation kernel is trivial in this case.

\begin{varproof}{\bf of Theorem \ref{theorem: universality} if $t\leq 1$.}
We first recall, from formula (\ref{definition P 0}), that the structure of the local parametrix is as follows,
\[P(z)=P^{(\infty)}(z)\begin{pmatrix}1&*\\0&1\end{pmatrix}, \qquad
\mbox{ for $z\in U_{x^*}$},\]
where $*$ denotes a matrix-entry for which the precise value is unimportant. Now plugging this parametrix into (\ref{KinPR}) leads to the following equation,
\begin{equation}\label{KinPR2}
    K_{n,t}(x,y) =
        \frac{e^{n\phi_{n,t,+}(x)}e^{n\phi_{n,t,+}(y)}}{2\pi i(x-y)}
        \begin{pmatrix}
            0 & 1
        \end{pmatrix}
        P^{(\infty)}(y)^{-1}R(y)R^{-1}(x)P^{(\infty)}(x)
        \begin{pmatrix}
            1 \\ 0
        \end{pmatrix}.
\end{equation}
Because of the uniform asymptotics (\ref{asymptotics R}) for $R$ and the analyticity of $R$ in the disk $U_{x^*}$, we have
that \begin{equation}
R^{-1}(y)R(x)=I+\bigO\left(\frac{x-y}{n^{1/2}}\right)\qquad \mbox{
as $x,y\to x^*$ and $n\to\infty$.}
\end{equation}
Since $P^{(\infty)}$ is analytic in $U_{x^*}$ (if $\nu=0$) and uniformly bounded in $n$ and $t$, we obtain for $n\to\infty$ and $x,y\to x^*$,
\begin{eqnarray}
    K_{n,t}(x,y) &=&
        \frac{
        e^{n\phi_{n,t,+}(x)}e^{n\phi_{n,t,+}(y)}}{2\pi i(x-y)}
        \left(\begin{pmatrix}
            0 & 1
        \end{pmatrix}
        P^{(\infty)}(y)^{-1}P^{(\infty)}(x)
        \begin{pmatrix}
            1 \\ 0
        \end{pmatrix}+\bigO\left(\frac{x-y}{n^{1/2}}\right)\right)\nonumber\\
        &=&\frac{e^{n\phi_{n,t,+}(x)}e^{n\phi_{n,t,+}(y)}}{2\pi
        i(x-y)}\ \times \ \bigO(x-y).\label{KinPinfty}
\end{eqnarray}
Now recall from (\ref{property g: 2}) (where in addition we should note that, since $\mu_t=\rho_t$ as $t\leq 1$, inequality holds near $x^*$ as well) and (\ref{equation g phi})
that $\phi_{n,t}(x)\leq 0$ for $x>b_{n,t}'$ as $t\leq 1$.
This is already sufficient to conclude that $K_{n,t}(x,y)$ is bounded, which clearly implies,
for
\[x=x^*+\frac{u}{(cn)^{1/2}}, \qquad x=x^*+\frac{v}{(cn)^{1/2}},\]
that
\begin{equation}
\frac{1}{(cn)^{1/2}}K_n(x,y)=\bigO(n^{-1/2}) \qquad \mbox{ as
$n\to\infty$.}
\end{equation}
This proves the theorem in the case $t\leq 1$.
\end{varproof}

\begin{remark}
It is worth noting that we did not use the fact that we re-scaled with a factor $(cn)^{-1/2}$. If we would re-scale by putting $x,y=x^*+\bigO(n^{-\gamma})$, for any $\gamma >0$ we obtain directly from (\ref{KinPinfty}) that $\frac{1}{n^\gamma}K_{n,t}(x,y)=\bigO(n^{-\gamma})$. The only reason why we did choose this particular scaling, is that it is the 'right' scaling for $t>1$.
\end{remark}

\medskip

For $t>1$, we have that $\nu>0$, which leads to a local parametrix which is somewhat more complicated.
This makes the proof of the theorem slightly more involved.

\begin{varproof}{\bf of Theorem \ref{theorem: universality} if
$t>1$.} We recall from Section \ref{section: local} and
(\ref{definition: P+}) in particular that the local parametrix $P$
has the form
\begin{equation}
P(z)=E_{n,t}(z)\Psi(\sqrt{n}f_{n,t}(z);k)
e^{-n\phi_{n,t}(z)\sigma_3}(z-x_{n,t}^*)^{-\nu\sigma_3}.
\end{equation}
Inserting this formula into (\ref{KinPR}) leads us to the following identity,
\begin{multline}\label{KinPR3}
    K_{n,t}(x,y) =
        \frac{1}{2\pi i(x-y)}
        \begin{pmatrix}
            0 & 1
        \end{pmatrix}
    \Psi_+^{-1}(\sqrt nf_{n,t}(y);k)E_{n,t}^{-1}(y)R(y)\\
    \times \quad R^{-1}(x)E_{n,t}(x)\Psi_+(\sqrt nf_{n,t}(x);k)
        \begin{pmatrix}
            1 \\ 0
        \end{pmatrix}.
\end{multline}
As in the case where $t<1$, we derive from the uniform asymptotics (\ref{asymptotics R}) for $R$ and the analyticity of $R$ that the following holds in the double scaling limit,
\[R^{-1}(y)R(x)=I+\bigO\left(\frac{x-y}{n^{1/2 - |\Delta|}}\right), \qquad \mbox{ for $x,y\to x^*$.}\]
The structure (\ref{definition Ehat}) of $E$ then implies together with Proposition \ref{proposition phix}
that \[E_{n,t}(z)=\bigO(\sqrt n^{\Delta\sigma_3})\widehat E_{n,t}(z),\] and this yields
\begin{equation}\label{ER}
E_{n,t}^{-1}(y)R(y)R^{-1}(x)E_{n,t}(x)=I+\bigO\left((x-y)n^{|\Delta|}\right),\qquad \mbox{ for $x,y\to x^*$.}
\end{equation}
Let us now re-scale the variables $x$ and $y$ by putting
\[x=x^*+\frac{u}{(cn)^{1/2}}, \quad y=x^*+\frac{v}{(cn)^{1/2}}, \qquad \mbox{ with }c=f_1'(x^*)^2=\frac{1}{\sqrt 2}q_V''(x^*)^{1/2},\] where $u,v\in [-M,M]$ for some arbitrary large constant $M>0$.
Since we know by (\ref{estimate f}) and from Section \ref{section equilibrium} that \[f_{n,t}'(x_{n,t}^*)^2=c+\bigO(t-1),\quad x_{n,t}^*=x^*+\bigO(t-1), \qquad\mbox{ as $n\to\infty$ and $t\to 1$,}\] we have that
\[\sqrt n f_{n,t}(x)=u+\bigO(\sqrt n(t-1)), \qquad \sqrt n f_{n,t}(y)=v+\bigO(\sqrt n(t-1)).\]
Using the above estimates, (\ref{KinPR3}) reduces to
\begin{multline}\label{KPsi}
K_n(x,y)=\frac{1}{2\pi i(x-y)}\left(\begin{pmatrix}
            0 & 1
        \end{pmatrix}
    \Psi_+^{-1}(v;k)\Psi_+(u;k)
        \begin{pmatrix}
            1 \\ 0
        \end{pmatrix}\right. \\
        \left. +\bigO\left(\frac{u-v}{n^{1/2-|\Delta|}}\right)+\bigO\left(\frac{(u-v)\log n}{n^{1/2}}\right)\right).
\end{multline}
If $\Delta$ remains a fixed distance away from $\pm 1/2$, it is now a straightforward calculation using the definition of $\Psi$ to check that
\begin{multline}\lim\frac{1}{(cn)^{1/2}}
K_{n,t}\left(x^*+\frac{u}{(cn)^{1/2}},x^*+\frac{v}{(cn)^{1/2}}\right)\\
=\sqrt\frac{k}{2}\ e^{-\frac{u^2+v^2}{2}}\ \frac{H_k(u)H_{k-1}(v)-H_k(v)H_{k-1}(u)}{u-v},
\end{multline}
which completes the proof of Theorem \ref{theorem: universality}.
\end{varproof}

\subsection{Proof of Theorem \ref{theorem: universality 2}}

In order to prove Theorem \ref{theorem: universality 2}, we need to find more accurate asymptotics for the eigenvalue correlation kernel. We will do this by exploring in more detail formula (\ref{KinPR3}). For the proof of Theorem \ref{theorem: universality}, it was sufficient to approximate $E_{n,t}^{-1}(y)R(y)R^{-1}(x)E_{n,t}(x)$ by the identity matrix. To arrive at Theorem \ref{theorem: universality 2}, we need to be a little bit more careful, and we need some better estimate, compared to (\ref{ER}).
From the asymptotic formula for $R$ we obtained in Section \ref{section final}, we recall
that
\begin{equation}
R(z)=I+\bigO(n^{-1/2+|\Delta|}),
\end{equation}
uniformly for $z\in\mathbb C\setminus \Sigma_R$,
in the double scaling limit we considered.

\medskip

Using (\ref{definition Ehat}) and Proposition \ref{proposition phix}, we can refine (\ref{ER}) in the following way as $x,y\to x^*$ in the double scaling limit,
\begin{multline}\label{ER3}
E_{n,t}^{-1}(y)R(y)R^{-1}(x)E_{n,t}(x)\\
=\begin{cases}\begin{array}{ll}
I+2\pi ic_{n,t}^+\begin{pmatrix}0&0\\1&0\end{pmatrix}(x-y)n^{\Delta}+\bigO(x-y),&\mbox{ as $\Delta\geq 0$,}\\[3ex]
I-2\pi ic_{n,t}^-\begin{pmatrix}0&1\\0&0\end{pmatrix}(x-y)n^{-\Delta}+\bigO(x-y),&\mbox{ as $\Delta\leq 0$,}\end{array}\end{cases}
\end{multline}
where $c_{n,t}^\pm$ are some sequences that are bounded in the double scaling limit.

\begin{varproof}{\bf of Theorem \ref{theorem: universality 2}}
We start by picking up the exact formula (\ref{KinPR3}) from the proof of Theorem \ref{theorem: universality}. We now follow the same calculations as in the proof of Theorem \ref{theorem: universality}, but with (\ref{ER3}) inserted into (\ref{KinPR3}) instead of the less accurate approximation we used before.
For $\Delta\geq 0$ and with
\[x=x^*+\frac{u}{(cn)^{1/2}}, \qquad x=x^*+\frac{v}{(cn)^{1/2}},\]
it is straightforward to check that (\ref{KPsi}) should now be replaced by
\begin{multline*}\label{KinPR4}
    K_{n,t}(x,y) =
        \frac{1}{2\pi i(x-y)}\left(
        \begin{pmatrix}
            0 & 1
        \end{pmatrix}
    \Psi_+^{-1}(v;k)\Psi_+(u;k)
        \begin{pmatrix}
            1 \\ 0
        \end{pmatrix}\right. \\
        +
         \left.  2\pi ic_{n,t}^+n^{\Delta}(x-y)
        \begin{pmatrix}
            0 & 1
        \end{pmatrix}
    \Psi_+^{-1}(v;k)\begin{pmatrix}0 & 0\\
1 & 0
\end{pmatrix}\Psi_+(u;k)
        \begin{pmatrix}
            1 \\ 0
        \end{pmatrix}+\bigO\left(\frac{(u-v)\log n}{n^{1/2}}\right)\right).
\end{multline*}
We have already computed the first term on the right hand side in the proof of Theorem \ref{theorem: universality}, leading to the Hermite kernel $\mathbb K^{\rm GUE}(u,v;k)$. Using (\ref{def Psi}), we can easily compute the second term as well,
\[\begin{pmatrix}
            0 & 1
        \end{pmatrix}
    \Psi_+^{-1}(v;k)\begin{pmatrix}0 & 0\\
1 & 0
\end{pmatrix}\Psi_+(u;k)
        \begin{pmatrix}
            1 \\ 0
        \end{pmatrix}=\frac{\sqrt\pi\, k!}{2^{k}}e^{-\frac{u^2+v^2}{2}}H_k(u)H_k(v),\]
which gives us the formula
\begin{multline*}\label{KinPR5}
    \frac{1}{(cn)^{1/2}}K_{n,t}(x,y) =
        \mathbb K^{\rm GUE}(u,v;k)
        +
           c_{n,t}^+\frac{\sqrt\pi k!}{2^{k}}e^{-\frac{u^2+v^2}{2}}
        H_k(u)H_k(v)\frac{1}{n^{1/2-\Delta}}+\bigO\left(\frac{\log n}{n^{1/2}}\right).
\end{multline*}
The Christoffel-Darboux formula for orthogonal polynomials allows us to rewrite the right hand side of this equation as
\begin{equation}
e^{-\frac{u^2+v^2}{2}}\sum_{j=0}^{k-1}H_j(u)H_j(v)+ c_{n,t}^+\frac{\sqrt\pi k!}{2^{k}}e^{-\frac{u^2+v^2}{2}}
        H_k(u)H_k(v)\frac{1}{n^{1/2-\Delta}}+\bigO\left(\frac{\log n}{n^{1/2}}\right),
\end{equation}
which is equal to
\begin{equation}
(1-\lambda_{n,t}^+)\mathbb K^{\rm
GUE}(u,v;k)+
\lambda_{n,t}^+\mathbb K^{\rm
GUE}(u,v;k+1)+\bigO\left(\frac{\log n}{n^{1/2}}\right),\end{equation}
 with
 \begin{equation}\label{lambda3}\lambda_{n,t}^+=c_{n,t}^+\frac{\sqrt\pi k!}{2^{k}}\frac{1}{n^{1/2-\Delta}}.\end{equation}

\medskip

For $\Delta\leq 0$, we obtain in exactly the same way that
\begin{multline}\label{KinPR6}
    \frac{1}{(cn)^{1/2}}K_{n,t}(x,y) =
        \mathbb K^{\rm GUE}(u,v;k)\\
        -
           c_{n,t}^-\pi^{3/2}\frac{2^{k+1}}{(k-1)!} e^{-\frac{u^2+v^2}{2}}
        H_{k-1}(u)H_{k-1}(v)\frac{1}{n^{1/2+\Delta}}+\bigO\left(\frac{\log n}{n^{1/2}}\right).
\end{multline}
Again using the Christoffel-Darboux formula, this can be rewritten as
\begin{equation}\label{KinPR7}
    \frac{1}{(cn)^{1/2}}K_{n,t}(x,y) =
        (1-\lambda_{n,t}^-)\mathbb K^{\rm GUE}(u,v;k)
        +\lambda_{n,t}^-\mathbb K^{\rm GUE}(u,v;k-1)+\bigO\left(\frac{\log n}{n^{1/2}}\right),\end{equation}
with \begin{equation}\label{lambda4}\lambda_{n,t}^-= c_{n,t}^-\pi^{3/2}\frac{2^{k+1}}{(k-1)!}\frac{1}{n^{1/2+\Delta}}.\end{equation}
This proves (\ref{asymptotic expansion kernel}), and the estimates (\ref{lambda1}) and (\ref{lambda2}) for $\lambda_{n,t}^\pm$ follow from (\ref{lambda3}) and (\ref{lambda4}). This completes the proof of Theorem \ref{theorem: universality 2}.
\end{varproof}

\begin{remark}
One can obtain explicit formulas for the constants $c_{n,t}^\pm$ by computing the sub-leading term in the asymptotic expansion for $R$. This would provide explicit formulas for $\lambda_{n,t}^\pm$, which are however rather complicated. Since we have not been able to simplify those expressions considerably, we feel it is not very useful to give the involved computations leading to those formulas here.
\end{remark}

\section*{Acknowledgements}

The author is grateful to Arno Kuijlaars for useful discussions and remarks.

\medskip

\noindent The author is a Postdoctoral Fellow of the Fund for Scientific Research - Flanders (Belgium), and was also supported by FWO project G.0455.04,
K.U. Leuven research grant OT/04/21,
Belgian Interuniversity Attraction Pole P06/02, and by
ESF Program MISGAM. He is also grateful to the Department of Mathematical Sciences of Brunel University West London for hospitality.

\bigskip\noindent
{\sc Department of Mathematics, Katholieke Universiteit Leuven,\\
Celestijnenlaan 200B, 3001 Leuven, Belgium}

\medskip

\noindent {\sc Department of Mathematical Sciences, Brunel University West London, \\
Uxbridge UB8 3PH
United Kingdom\\

\medskip

\noindent E-mail
address: tom.claeys@wis.kuleuven.be

\end{document}